\newcommand{\mylabel}[2]{#2\def\@currentlabel{#2}\label{#1}}
\renewcommand\thesubfigure{(\alph{subfigure})}
\tikzset{snake it/.style={->,> = latex',decorate, decoration={snake, segment length=3mm, amplitude=0.5mm}}}
\renewcommand*{\thesubfigure}{(\arabic{subfigure})}
\theoremstyle{plain}
\newtheorem{theorem}{Theorem}
\newtheorem{corollary}[theorem]{Corollary}
\newtheorem{lemma}[theorem]{Lemma}
\newtheorem{definition}[theorem]{Definition}
\newtheorem{proposition}[theorem]{Proposition}
\newcommand{\keywords}[1]{\noindent\textbf{Keywords:} #1}
\DeclareDocumentCommand\R{}{\mathbb{R}}
\DeclareDocumentCommand\setdef{mo}{\left\{#1\IfNoValueTF{#2}{}{ \mid #2}\right\}}
\DeclareDocumentCommand\cplxP{}{\mathsf{P}}
\DeclareDocumentCommand\cplxNP{}{\mathsf{NP}}
\DeclareDocumentCommand\orderO{m}{\mathcal{O}\left(#1\right)}
\DeclareMathOperator{\rootGraphOperator}{RG}
\title{The Almost-Disjoint $2$-Path Decomposition Problem}
\author{Annika Thome\footnote{Lehrstuhl für Operations Research, RWTH Aachen University, Kackertstraße~7, 52072 Aachen, Germany, thome@or.rwth-aachen.de} \and Matthias Walter\footnote{Department of Applied Mathematics, University of Twente, P.O. Box 217, 7500 AE Enschede, The Netherlands}}
\date{\today}
\begin{document}

\maketitle

\DeclareDocumentCommand\probThreeSAT{}{\ensuremath{\text{3-SAT}}\xspace}

\begin{abstract}
   We consider the problem of decomposing a given (di)graph into paths of length $2$ with the additional restriction that no two such paths may have more than one vertex in common.
   We establish its $\cplxNP$-hardness by a reduction from $\probThreeSAT$, characterize (di)graph classes for which the problem can be be reduced to the Stable-set problem on claw-free graphs and describe a dynamic program for solving it for series-parallel digraphs.
\end{abstract}

\keywords{path decomposition, claw-free graphs, dynamic programming}

\DeclareDocumentCommand\X{}{\mathcal{X}}
\DeclareDocumentCommand\Y{}{\mathcal{Y}}

\section{Introduction}

In this paper we investigate a restricted $2$-path decomposition problem, the problem of partitioning the edge set of a graph (or the arc set of a digraph) into paths of length $2$, or observing that no such partitioning exists.
The unrestricted problem has been shown to be polynomial-solvable by Teypaz and Rapine~\cite{teypaz:pathDecomposition}.
Moreover, they showed that even the \emph{weighted version}, for which each $2$-path has a weight and the sum of the realized weights is to be minimized, can be solved in polynomial time.

\DeclareDocumentCommand\probAlmostDisjointTwoPathDecomposition{}{AD2PD\xspace}

Our variant, the \emph{almost-disjoint $2$-path decomposition problem} (\emph{\probAlmostDisjointTwoPathDecomposition}) has the additional requirement that no pair of $2$-paths may intersect in more than one vertex.
This problem arises when organizing a so-called \emph{consecutive dinner}.
This is an event where teams are served a (typically three course) menu during the evening, where each course is served at a different location and where each team prepares one of the courses at their place.
For each course a total of three teams meet, one being the host of the current course. 
Teams that eat, e.g., the main course at the same location should not meet again to eat dessert and also should not have met for appetizers already either.
When modeling this as a decomposition problem on a graph, each vertex represents a location of a course and each arc represents the option of having consecutive courses at the respective locations.
A $2$-path then represents the route of a team where the first vertex represents the location of the appetizers, the second vertex that of the main course and the last vertex the location of the dessert.
The requirement that no pair of $2$-paths intersects in more than one vertex represents the fact that no two teams shall meet more than once during the event. 
Organizers of such an event are interested in a partition into 2-paths of a graph representing the participating teams. 
If they are further interested in ensuring that the total distance covered by all teams is minimized (a smaller distance to the next location implies the team can spend more time at the current location before they have to leave) the organizers are interested in a partition of the weighted \probAlmostDisjointTwoPathDecomposition where the weight of an edge corresponds to the distance between the corresponding locations.
Organizers of such events are for example the team of RudiRockt \cite{RudiRockt}, Running Dinner Wien \cite{RunningDinnerWien} or the AEGEE who call this event Run\&Dine.

\DeclareDocumentCommand\twoPaths{}{\ensuremath{\mathcal{P}}\xspace}
\DeclareDocumentCommand\almostDisjointTwoPathDecomposition{}{\ensuremath{\mathbf{P}^*_2\text{-decomposition}}\xspace}
\DeclareDocumentCommand\almostDisjointTwoPathDecompositions{}{\ensuremath{\mathbf{P}^*_2\text{-decompositions}}\xspace}

\paragraph{Notation and basic concepts.}
The given multi-graphs and multi-digraphs we consider will typically be denoted by $G = (V,E)$ and $D = (V,A)$, respectively, and we consider paths as subsets of edges (resp.\ arcs).
By $E(A)$ we denote the undirected version of an arc set $A$, i.e., the edge set of the underlying undirected multi-graph, denoted by $G(D) := (V, E(A))$.
Note that a pair of (anti)parallel arcs yields parallel edges.

The set $\twoPaths := \twoPaths(G)$ (resp.\ $\twoPaths(D)$) shall denote the set of $2$-paths in $G$ (resp.\ $D$).
We say that two paths $P, Q \in \twoPaths$ are \emph{in conflict} if $|V(P) \cap V(Q)| \geq 2$, where $V(P)$ shall denote the vertex set of a path $P$.
Note that, by definition, two paths with a common arc or edge are in conflict.
We call a set $\X = \setdef{ P_1, \dotsc, P_k }$ a \emph{\almostDisjointTwoPathDecomposition} if for each $i \in \setdef{1, \dotsc, k}$, we have $P_i \in \twoPaths$, the $P_i$ partition $E$, and no two paths in $\X$ are in conflict.
The \probAlmostDisjointTwoPathDecomposition problem is the problem of deciding whether a \almostDisjointTwoPathDecomposition of $G$ (resp.\ $D$) exists.
In the weighted version, we are given costs $c_P \in \R$ for each $2$-path $P \in \twoPaths$ and have to determine a \almostDisjointTwoPathDecomposition $\X = \setdef{ P_1, \dotsc, P_k }$ whose cost $\sum_{i=1}^k c(P_i)$ is minimum.

\DeclareDocumentCommand\lineGraph{}{\ensuremath{L}\xspace}
\DeclareDocumentCommand\forbiddenPairs{}{\ensuremath{\mathcal{F}}\xspace}
\DeclareDocumentCommand\conflictGraph{}{\ensuremath{\mathcal{H}}\xspace}

For a graph $G = (V,E)$ (resp.\ digraph $D = (V,A)$) we define the \emph{line graph} $\lineGraph(G) := (E,\twoPaths(G))$ (resp.\ $\lineGraph(D) := (A,\twoPaths(D))$) as the graph with vertex set $E$ (resp.\ $A$) in which two vertices $e$ and $f$ are connected by an edge if and only if $\setdef{e,f}$ is a (directed) $2$-path, i.e., $\setdef{e,f} \in \twoPaths$.
It is easy to see that $2$-path decompositions in a (di)graph correspond to perfect matchings in its line graph.
This observation can be used to solve the \probAlmostDisjointTwoPathDecomposition for graphs without small cycles.

\begin{proposition}
   \label{TheoremGirth}
   The (weighted) \probAlmostDisjointTwoPathDecomposition problem on (di)graphs of girth at least 5 reduces to the $2$-path decomposition problem.
   Consequently, it can be solved in polynomial time.
\end{proposition}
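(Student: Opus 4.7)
My plan is to argue that, on (di)graphs of girth at least $5$, every unrestricted $2$-path decomposition is automatically an \almostDisjointTwoPathDecomposition. Hence the AD2PD instance and the corresponding $2$-path decomposition instance have exactly the same feasible solutions (with identical objective values in the weighted version), and the polynomial-time algorithm of Teypaz and Rapine~\cite{teypaz:pathDecomposition} settles the theorem immediately.

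For the key step, suppose a decomposition $\mathcal{X}$ contains two $2$-paths $P, Q$ with $|V(P) \cap V(Q)| \ge 2$. Since $P$ and $Q$ appear in a decomposition they are edge- (resp.\ arc-)disjoint, so the subgraph $G'$ of $G$ (resp.\ of $G(D)$) formed by the edges of $P$ and $Q$ has $|E(G')| = 2+2 = 4$ and
\[
   |V(G')| \;\le\; |V(P)| + |V(Q)| - |V(P) \cap V(Q)| \;\le\; 3 + 3 - 2 \;=\; 4.
\]
Because any forest on at most four vertices has at most three edges, $G'$ must contain a cycle, and that cycle has length at most $|V(G')| \le 4$, contradicting the girth assumption. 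In the directed case the same count is carried out in the underlying graph $G(D)$, and any (anti)parallel arcs inside $P \cup Q$ would already show up as parallel edges of $G(D)$, which are themselves precluded by girth $\ge 5$.

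Given this observation, I would conclude by running the Teypaz--Rapine algorithm on the input (di)graph, using the original costs $c_P$ in the weighted version. Every output is a valid \almostDisjointTwoPathDecomposition, and conversely every \almostDisjointTwoPathDecomposition is a legal $2$-path decomposition, so feasibility and optimality transfer in both directions. The only slightly delicate point, which I expect to be the main obstacle in a careful write-up, is making the edge-count argument airtight in the multigraph and digraph settings simultaneously; fortunately, girth $\ge 5$ already excludes loops and (anti)parallel arcs, so no exotic configurations of $P \cup Q$ need to be considered.
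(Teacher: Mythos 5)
Your proposal is correct and follows essentially the same route as the paper: observe that girth at least $5$ forces any two edge-disjoint $2$-paths to share at most one vertex, so the AD2PD instance coincides with the unrestricted $2$-path decomposition instance, which is then solved via Teypaz--Rapine. The paper states the key observation without proof, whereas you supply the explicit edge/vertex count; this is a harmless (and welcome) elaboration, not a different argument.
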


\begin{proof}
   Let $G$ (resp.\ $D$) be a (di)graph of girth at least 5, i.e., every cycle has length at least 5.
   Thus, no two $2$-paths can share more than one vertex, and the (weighted) \probAlmostDisjointTwoPathDecomposition problem is in fact a (weighted) $2$-path decomposition problem.
   The latter can be solved in polynomial time by solving a (minimum-weight) perfect matching problem in $\lineGraph(G)$~\cite{teypaz:pathDecomposition}.
\end{proof}

If shorter cycles are present, the requirement of not having conflicting paths can be interpreted as forbidden edge pairs in $L(G)$.
More precisely, using
\begin{gather}
   \label{EquationForbiddenPairs}
   \forbiddenPairs := \setdef{ \setdef{P,Q} }[ \text{$P,Q \in \twoPaths$, $P \neq Q$, and $P$ and $Q$ are in conflict} ]
\end{gather}
we can restate the perfect-matching problem in $L(G)$ with forbidden edge pairs as a maximum stable-set problem in the \emph{conflict graph} $\conflictGraph := (\twoPaths,\forbiddenPairs)$.

\paragraph{Related work.}
The more general perfect-matching problem for arbitrary graphs $\hat{G}$ with pairwise forbidden edges is known to be $\cplxNP$-hard~\cite{darmann:pathsTreesMatchingUnderDisjunctiveConstraints}.
In fact, it is shown that it is $\cplxNP$-hard even if each edge is in conflict with at most one other edge.
The same problem was studied for special graphs in~\cite{oencan:minCostPerfectMatchingProblemWithConflictPairConstraints}.
The authors considered graphs $\hat{G}$ having subgraphs $\hat{G}_1, \hat{G}_2, \dotsc, \hat{G}_k$ such that the union of arbitrary perfect matchings in each subgraph constitute a perfect matching in $\hat{G}$ and exactly one edge in each subgraph is in conflict with some other edge.
In this case, the perfect-matching problem with pairwise forbidden edges is polynomial-time equivalent to the maximum stable-set problem on the corresponding conflict graph.
The authors also showed that if the conflict graph is the union of a fixed number of disjoint cliques, then both problems can be solved in polynomial time. 

The complexity of partitioning the edges of an undirected graph into $s$ disjoint multiples of a given graph $G$ is studied in~\cite{lonc:edgeDecompositionIsPolynomial}.
Since this problem is $\cplxNP$-complete if $G$ is a complete graph with at least 3 vertices, a path with at least 5 vertices or if $G$ is a cycle of any length (see~\cite{holyer:NPCompletenessOfSomeEdgePartitionProblems}), and polynomially solvable if $G$ is a matching (see~\cite{alon:NoteOnDecompositionOfGraphsIntoIsomorphicMatchings}) or the vertex-disjoint union of a $2$-path and single edges (see~\cite{priesler:OnTheDecompositionOfGraphsIntoCopiesOfPathsAndEdges}), the authors provide a polynomial-time algorithm for the case when $G$ is the vertex-disjoint union of $2$-paths and single edges.

The problem of partitioning the edges of a graph $G$ into paths whose lengths are between two given integers $a$ and $b$ are studied in~\cite{teypaz:pathDecomposition}.
The authors propose a polynomial-time algorithm for the problem of finding such a partition if $a = 2$.
Furthermore, they investigate special cases for bipartite graphs, trees, and Euler graphs.
They also extend some of their results to the weighted case.


In his dissertation~\cite{fox:forbiddenPairsMakeProblemsHard}, the author discusses how forbidden pairs often result in hard problems.
In his work he focuses on combinatorial optimization problems where a forbidden pair constitutes two vertices.
For the problems of finding a matching or deciding whether a given graph is cyclic the author specifically proves that they are $\cplxNP$-complete.
He generalizes this by showing that deciding if a given graph has a certain property (a \emph{minor-ancestral} graph property) is $\cplxNP$-complete if there are certain restrictions on the graph (having no \emph{clawful characteristic graph}).

\paragraph{Contribution and outline.}
In Section~\ref{SectionComplexity} we establish the $\cplxNP$-hardness of the \probAlmostDisjointTwoPathDecomposition problem.
The remaining sections are dedicated to special (di)graph classes for which the problem can be solved in polynomial time.
In Section~\ref{SectionClawFree} we characterize those (di)graphs for which the (weighted) problem can be reduced to a stable-set problem in a claw-free graph.
Then, a dynamic program for the problem in series-parallel digraphs is proposed in Section~\ref{SectionSeriesParallel}.

%
%

\section{Complexity}
\label{SectionComplexity}

\DeclareDocumentCommand\True{}{\texttt{true}\xspace}
\DeclareDocumentCommand\False{}{\texttt{false}\xspace}

In this section we establish the computational complexity of the \probAlmostDisjointTwoPathDecomposition problem on directed and undirected graphs.
More precisely, we show that the \probAlmostDisjointTwoPathDecomposition problem on (di)graphs is $\cplxNP$-complete by a reduction from \probThreeSAT.

\DeclareDocumentCommand\clause{}{\ensuremath{\mathcal{C}}\xspace}

An instance of the \probThreeSAT problem consists of boolean variables $x_1, \dotsc, x_n$ and a set of \emph{clauses} $\clause_1, \dotsc, \clause_m$.
Each clause $\clause_j$ is a triple of \emph{literals} $y_{j,1}, y_{j,2}, y_{j,3} \in \setdef{ x_1, x_2, \dotsc, x_n, \bar{x}_1, \bar{x}_2, \dotsc, \bar{x}_n }$.
A literal $y_{j,k}$ is \emph{positive} if it is equal to $x_i$ for some $i$ and \emph{negative} otherwise.
The \probThreeSAT problem is to determine whether there is a boolean assignment for $x$ such that for each $j \in \setdef{1,2,\dotsc,m}$, the clause $\clause_j$ is \emph{satisfied}, i.e., at least one of the three literals of $\clause_j$ is positive and the corresponding variable is true or it is negative and the corresponding variable is false.
It is well-known that the \probThreeSAT problem is $\cplxNP$-complete (see~\cite{garey:computers_intractability}).

Given a \probThreeSAT instance $\clause_1, \dotsc, \clause_m$, the reduction works by constructing a digraph $D = (V,A)$ with the property that the existence of a \almostDisjointTwoPathDecomposition of $D$, the existence of a \almostDisjointTwoPathDecomposition of $G(D)$ and the satisfiability of the \probThreeSAT instance are equivalent.
The digraph $D$ will consist of different gadget subgraphs, namely a \emph{clause gadget} for each clause that is connected to three \emph{pair gadgets} associated with pairs of literals in that clause, and one \emph{variable gadget} for each variable, which is connected to pair gadgets corresponding to clauses that contain the variable.

\paragraph{Variable gadget.}
For a variable $x_i$ that appears in $\ell$ clauses the \emph{variable gadget} is a directed cycle $C_i$ of length $12\ell$. 
This cycle is partitioned into $2\ell$ arc-disjoint paths, each of length $6$. 
We call these paths \emph{reserved subpaths}.
Two such paths are associated with every clause the variable appears in.
Let $\clause_j$ be such a clause, and let $y_{j,k}$ for $k \in \setdef{1,2,3}$ be the literal involving $x_i$, that is, $y_{j,k} \in \setdef{x_i, \bar{x}_i}$.
The two paths correspond to pairs $\setdef{k,k'}$ for both values $k' \in \setdef{1,2,3} \setminus \setdef{k}$, and are denoted by $P_{i,j,\setdef{k,k'}}$, respectively.

Suppose we number the arcs of $P_{i,j,\setdef{k,k'}}$ with $1, \dotsc, 6$ in order of traversal.
If the literal is positive, i.e., if $y_{j,k} = x_i$, then the arcs $1, \dotsc, 4$ intersect the pair gadget that corresponds to the literal pair $\setdef{k,k'}$ of clause $\clause_j$, and the arcs $5,6$ intersect no gadget.
Otherwise, i.e., if $y_{j,k} = \bar{x}_i$, the arcs $2, \dotsc, 5$ intersect the pair gadget mentioned above, and arcs $1,6$ intersect no gadget.
The cycle itself admits two \almostDisjointTwoPathDecompositions which correspond to the two values \True and \False, respectively:
The value \True is associated with the \almostDisjointTwoPathDecomposition that consists of \almostDisjointTwoPathDecompositions of the $2\ell$ reserved subpaths.
Correspondingly, the \almostDisjointTwoPathDecomposition of which some $2$-paths intersect two of the reserved subpaths is associated with the \False assignment. 
Figure~\ref{FigureVariableGadget} depicts such a variable gadget.

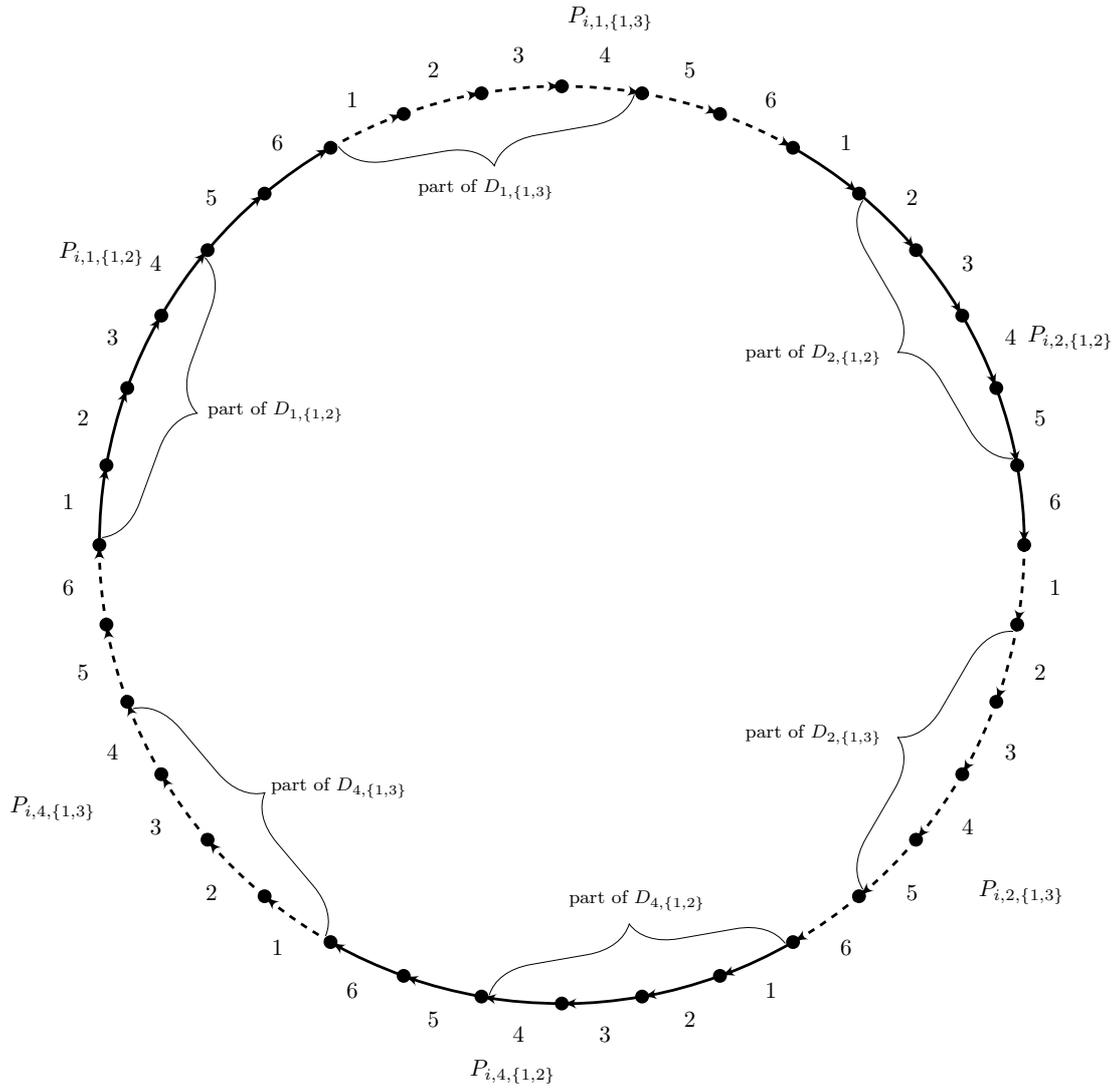
\begin{figure}
   \begin{center}
	 \resizebox{\textwidth}{!}{
      \begin{tikzpicture}
\tikzset{vertex/.style = {circle, fill, thick, inner sep=0pt, minimum size=6pt}}
\tikzset{thickvertex/.style = {draw,shape=circle,line width=1.2pt,minimum size=16pt,inner sep = 0.6pt,node distance= 1.5cm}}
\tikzset{arc/.style = {very thick, ->,> = latex'}}
\tikzset{dashedIn/.style = {densely dotted,->,> = latex'}}
\tikzset{dashedOut/.style = {densely dotted,- = latex'}}

\def \n {36}
\def \radius {7cm}
\def \radiusPathLabel {8.5cm}

\foreach \a in {1,2,3,...,\n}{
	\node[vertex] at ({-360/\n * (\a - 1)+180}:\radius) (\a) {};  
}
\foreach \a in {1,...,6}{
		\draw[arc] ({-360/\n * (\a - 1)+180}:\radius) 
			arc ({-360/\n * (\a - 1)+180}:{-360/\n * (\a)+180}:\radius);
		\draw[arc,dashed] ({-360/\n * (\a - 1+6)+180}:\radius) 
			arc ({-360/\n * (\a - 1+6)+180}:{-360/\n * (\a+6)+180}:\radius);
		\draw[arc] ({-360/\n * (\a - 1+12)+180}:\radius) 
			arc ({-360/\n * (\a - 1+12)+180}:{-360/\n * (\a+12)+180}:\radius);
		\draw[arc,dashed] ({-360/\n * (\a - 1+18)+180}:\radius) 
			arc ({-360/\n * (\a - 1+18)+180}:{-360/\n * (\a+18)+180}:\radius);
		\draw[arc] ({-360/\n * (\a - 1+24)+180}:\radius) 
			arc ({-360/\n * (\a - 1+24)+180}:{-360/\n * (\a+24)+180}:\radius);
		\draw[arc,dashed] ({-360/\n * (\a - 1+30)+180}:\radius) 
			arc ({-360/\n * (\a - 1+30)+180}:{-360/\n * (\a+30)+180}:\radius);
}

\foreach \a in {1,2,...,6}{
	\draw (\a*360/-36+185: 7.5cm) node {${\a}$};
	\draw (\a*360/-36+6*360/-36+185: 7.5cm) node {${\a}$};
	\draw (\a*360/-36+12*360/-36+185: 7.5cm) node {${\a}$};
	\draw (\a*360/-36+18*360/-36+185: 7.5cm) node {${\a}$};
	\draw (\a*360/-36+24*360/-36+185: 7.5cm) node {${\a}$};
	\draw (\a*360/-36+30*360/-36+185: 7.5cm) node {${\a}$};
}

\draw (4*360/-36+185: \radiusPathLabel) node[label=below:{$P_{i,1,\setdef{1,2}}$}] () {};

\draw (10*360/-36+185: \radiusPathLabel) node[label=below:{$P_{i,1,\setdef{1,3}}$}] () {};

\draw (16*360/-36+185: \radiusPathLabel) node[label=below:{$P_{i,2,\setdef{1,2}}$}] () {};

\draw (22*360/-36+185: \radiusPathLabel) node[label=below:{$P_{i,2,\setdef{1,3}}$}] () {};

\draw (28*360/-36+185: \radiusPathLabel) node[label=above:{$P_{i,4,\setdef{1,2}}$}] () {};

\draw (34*360/-36+185: \radiusPathLabel) node[label=below:{$P_{i,4,\setdef{1,3}}$}] () {};

\draw [decorate,decoration={brace,amplitude=20pt,mirror},xshift=-20pt,yshift=10pt] 
(1) -- (5) node [black,pos=0.45,xshift=55pt] {\footnotesize part of $D_{1,\setdef{1,2}}$};

\draw [decorate,decoration={brace,amplitude=20pt,mirror},xshift=-20pt,yshift=10pt] 
(7) -- (11) node [black,pos=0.5,xshift=0pt,yshift=-30pt] {\footnotesize part of $D_{1,\setdef{1,3}}$};

\draw [decorate,decoration={brace,amplitude=20pt,mirror},xshift=-20pt,yshift=10pt] 
(14) -- (18) node [black,pos=0.6,xshift=-60pt,yshift=0pt] {\footnotesize part of $D_{2,\setdef{1,2}}$};

\draw [decorate,decoration={brace,amplitude=20pt,mirror},xshift=-20pt,yshift=10pt] 
(20) -- (24) node [black,pos=0.4,xshift=-60pt,yshift=0pt] {\footnotesize part of $D_{2,\setdef{1,3}}$};

\draw [decorate,decoration={brace,amplitude=20pt,mirror},xshift=-20pt,yshift=10pt] 
(25) -- (29) node [black,pos=0.5,xshift=0pt,yshift=30pt] {\footnotesize part of $D_{4,\setdef{1,2}}$};

\draw [decorate,decoration={brace,amplitude=20pt,mirror},xshift=-20pt,yshift=10pt] 
(31) -- (35) node [black,pos=0.65,xshift=60pt,yshift=0pt] {\footnotesize part of $D_{4,\setdef{1,3}}$};

      \end{tikzpicture}
			}
   \end{center}
   \caption{%
      Example of a variable gadget of variable $x_i$ that appears as a positive literal in clauses $C_1$ and $C_4$ and as a negative literal in clause $C_2$.
      In each of these clauses, $x_i$ is the first variable.%
   }
   \label{FigureVariableGadget}
\end{figure}

\paragraph{Pair gadget.}
Consider a clause $\clause_j$ and an unordered pair $y_{j,k},y_{j,k'}$ of its literals, i.e., $k,k' \in \setdef{1,2,3}$ and $k \neq k'$.
The pair gadget is a digraph $D_{j,\setdef{k,k'}} = (V_{j,\setdef{k,k'}}, A_{j,\setdef{k,k'}})$ with 16 vertices $v_1$, $v_2$, $v_3$, $v_4 = w_2$, $v_5$, $w_1$, $w_3$, $w_4$, $w_5$, $u_1$, $u_2$, $u_3$, $u_4$, $u_5$, $c_1$ and $c_2$, and is depicted in Figure~\ref{FigurePairGadget}.
In the figure, there are 15 dotted lines with arrows at both ends, indicating pairs of antiparallel arcs.
Additionally, six of the arcs are dashed.
We call the dashed arcs \emph{optional} arcs and denote them by $A'_{j,\setdef{k,k'}}$.
We will require the following properties of the pair gadget.

\begin{figure}
   \begin{center}
      \begin{tikzpicture}
         \tikzset{vertex/.style = {circle, draw, thick, inner sep=0pt, minimum size=6pt}}
         \tikzset{arc/.style = {->, very thick}}
         \tikzset{optional/.style = {->, very thick, red, dashed}}
         \tikzset{enforcing/.style = {<->, very thick, dotted, green!70!black}}

         \node[vertex,label=below:{$v_1$}] (v1) at (0.5,0) {};
         \node[vertex,label=left:{$v_2$}] (v2) at (0,2) {};
         \node[vertex,label=below:{$v_3$}] (v3) at (3,2) {};
         \node[vertex,label=below:{$v_4 = w_2$}] (v4w2) at (6,2) {};
         \node[vertex,label=below:{$v_5$}] (v5) at (4.9,0) {};

         \node[vertex,label=below:{$w_1$}] (w1) at (7.0,0) {};
         \node[vertex,label=below:{$w_3$}] (w3) at (9,2) {};
         \node[vertex,label=right:{$w_4$}] (w4) at (12,2) {};
         \node[vertex,label=below:{$w_5$}] (w5) at (11.5,0) {};

         \node[vertex,label=above:{$u_1$}] (u1) at (12,4) {};
         \node[vertex,label=above:{$u_2$}] (u2) at (9,4) {};
         \node[vertex,label=above:{$u_3$}] (u3) at (6,4) {};
         \node[vertex,label=above:{$u_4$}] (u4) at (3,4) {};
         \node[vertex,label=above:{$u_5$}] (u5) at (0,4) {};
         
         \node[vertex,label=left:{$c_1$}] (c1) at (8,5) {};
         \node[vertex,label=right:{$c_2$}] (c2) at (10,5) {};
         
         \draw[optional] (v1) -- (v2);
         \draw[arc] (v2) -- (v3);
         \draw[arc] (v3) -- (v4w2);
         \draw[optional] (v4w2) -- (v5);
         \draw[optional] (w1) -- (v4w2);
         \draw[arc] (v4w2) -- (w3);
         \draw[arc] (w3) -- (w4);
         \draw[optional] (w4) -- (w5);

         \draw[arc] (u1) -- (u2);
         \draw[arc] (u2) -- (u3);
         \draw[arc] (u3) -- (u4);
         \draw[arc] (u4) -- (u5);

         \draw[arc] (v2) -- (u4);
         \draw[arc] (u4) -- (v4w2);
         \draw[arc] (v4w2) -- (u2);
         \draw[arc] (u2) -- (w4);
         
         \draw[optional] (c1) -- (u2);
         \draw[optional] (u2) -- (c2);
         
         \draw[enforcing] (v1) -- (u4);
         \draw[enforcing] (v2) -- (u3);
         \draw[enforcing] (v3) -- (u4);
         \draw[enforcing] (v3) -- (u2);
         \draw[enforcing] (u4) -- (v5);
         \draw[enforcing] (w1) -- (u2);
         \draw[enforcing] (w3) -- (u2);
         \draw[enforcing] (u2) -- (w5);
         \draw[enforcing] (u3) -- (w4);
         \draw[enforcing] (u4) -- (w3);
         \draw[enforcing] (v3) to[bend right=30] (w3);
         \draw[enforcing] (v5) to[bend right=10] (w1);
         \draw[enforcing] (v3) to[bend right=15] (w1);
         \draw[enforcing] (v5) to[bend right=5] (u2);
         \draw[enforcing] (u4) to[bend right=5] (w1);
         \draw[enforcing] (v5) to[bend right=15] (w3);
      \end{tikzpicture}
   \end{center}
   \caption{Pair gadget. Dotted double arcs indicate pairs of antiparallel arcs. Only dashed arcs are part of $2$-paths that might use arcs from outside the gadget.}
   \label{FigurePairGadget}
\end{figure}
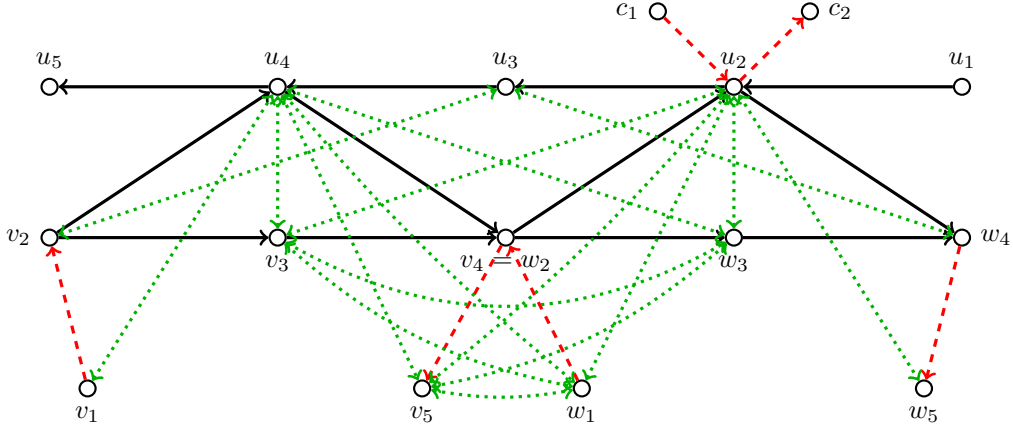

\begin{proposition}
   \label{TheoremPairGadgetDecompositions}
   There exist \almostDisjointTwoPathDecompositions of $D_{j,\setdef{k,k'}} - M$ for the following sets $M$ of optional arcs.
   \begin{itemize}
   \item[(TT)]
      \label{TheoremPairGadgetDecompositionsTT}
      $M = \setdef{(c_1,u_2), (u_2,c_2)}$
   \item[(TF)]
      \label{TheoremPairGadgetDecompositionsTF}
      $M = \setdef{(c_1,u_2), (u_2,c_2), (w_1,w_2), (w_4,w_5)}$
   \item[(FT)]
      \label{TheoremPairGadgetDecompositionsFT}
      $M = \setdef{(c_1,u_2), (u_2,c_2), (v_1,v_2), (v_4,v_5)}$
   \item[(FF)]
      \label{TheoremPairGadgetDecompositionsFF}
      $M = \setdef{(v_1,v_2), (v_4,v_5), (w_1,w_2), (w_4,w_5)}$
   \end{itemize}
\end{proposition}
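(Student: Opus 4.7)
The plan is to verify the proposition by explicit construction: for each of the four cases (TT), (TF), (FT), (FF), I exhibit a set $\X \subseteq \twoPaths(D_{j,\setdef{k,k'}} - M)$ and then check that (i) the arcs of the $2$-paths in $\X$ partition $A_{j,\setdef{k,k'}} \setminus M$, (ii) each element of $\X$ is in fact a $2$-path in $D_{j,\setdef{k,k'}} - M$, and (iii) no two elements of $\X$ share more than one vertex. All three conditions are finite and local, so the proof is ultimately a verification of four explicit tables.

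Before listing the decompositions, I would exploit two structural observations that cut down the case analysis. First, the gadget has a natural $v$--$w$ symmetry: mapping $(v_1,v_2,v_3,v_4,v_5)$ to $(w_5,w_4,w_3,w_2,w_1)$ and extending in the obvious way to the crossings and the antiparallel pairs is an isomorphism of $D_{j,\setdef{k,k'}}$ that swaps the arc sets $\setdef{(v_1,v_2),(v_4,v_5)}$ and $\setdef{(w_1,w_2),(w_4,w_5)}$. Hence the sets $M$ in (TF) and (FT) are exchanged by this symmetry, and it suffices to produce decompositions for the three cases (TT), (TF) and (FF); the (FT) decomposition is then obtained by applying the symmetry. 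Second, whenever an optional boundary arc $(v_1,v_2)$, $(v_4,v_5)$, $(w_1,w_2)$, $(w_4,w_5)$, $(c_1,u_2)$ or $(u_2,c_2)$ is present in $D_{j,\setdef{k,k'}} - M$, its degree-$1$ endpoint ($v_1,v_5,w_1,w_5,c_1,c_2$ respectively) forces the arc to be paired with a unique continuation inside the gadget. This fixes a large portion of $\X$ at the outset of each case.

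After these forced choices, the arcs that remain to be covered are the inner paths $v_2 v_3 v_4 w_3 w_4$ and $u_1 u_2 u_3 u_4 u_5$, the four crossings $(v_2,u_4),(u_4,v_4),(v_4,u_2),(u_2,w_4)$ and a number of antiparallel pairs. The abundance of antiparallel arcs provides enough flexibility to complete the partition, and I would write out each completion in tabular form. Conditions (i) and (ii) are then syntactic bookkeeping against the arc list of the gadget.

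The main obstacle I expect is condition (iii). Several vertex pairs — for example $\setdef{v_3,u_4}$, $\setdef{v_3,u_2}$, $\setdef{v_5,w_3}$, $\setdef{u_4,w_1}$, $\setdef{u_2,w_3}$ — occur as the endpoints of two or more arcs of the gadget (an antiparallel pair, a crossing, or both), so two distinct $2$-paths running through such a pair are immediately in conflict. I would therefore verify (iii) by iterating over the sixteen vertices: for each vertex $x$, I would list the elements of $\X$ incident to $x$ and check that their remaining two vertices are pairwise distinct. Since $\X$ has only a constant number of elements and each vertex has bounded degree in the gadget, the resulting case check is finite and can be condensed into a per-vertex table accompanying each of the four decompositions.
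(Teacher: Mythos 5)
Your overall strategy---exhibit an explicit set of $2$-paths for each of the four cases and verify partition, path-hood and conflict-freeness by finite inspection---is exactly the strategy of the paper's proof, and your preliminary observations are largely sound. The map $v_i \mapsto w_{6-i}$, $w_i \mapsto v_{6-i}$, $u_i \mapsto u_{6-i}$ does exchange the (TF) and (FT) instances, although it is an isomorphism only after additionally reversing all arcs (for instance $(v_1,v_2)$ is sent to $(w_5,w_4)$, which is not an arc, whereas its reversal $(w_4,w_5)$ is) and only on the gadget with $(c_1,u_2)$ and $(u_2,c_2)$ removed---harmless here, since both relevant sets $M$ contain these arcs. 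Likewise, each pair of antiparallel arcs is forced to form its own $2$-path, and the resulting conflicts do force $(v_1,v_2)$ into $(v_1,v_2,v_3)$, $(v_4,v_5)$ into $(v_3,v_4,v_5)$, and similarly for the $w$-side. Note, however, that your ``unique continuation'' claim fails for the $c$-arcs: $(c_1,u_2,u_3)$, $(c_1,u_2,w_4)$ and $(c_1,u_2,c_2)$ all survive the antiparallel-pair conflicts, so the forcing is weakest exactly in case (FF), where those arcs must be covered.

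The genuine gap is that the decompositions themselves are never written down. The proposition is a pure existence statement about one explicit finite digraph, so its proof \emph{is} the list of $2$-paths; everything else is bookkeeping. You defer that list to tables that do not appear, and you bridge the resulting hole with the assertion that ``the abundance of antiparallel arcs provides enough flexibility to complete the partition.'' This is not only unsubstantiated but points in the wrong direction: the antiparallel pairs contribute no flexibility whatsoever, since each is frozen into its own $2$-path, and all of the nontrivial choices concern how the four crossing arcs $(v_2,u_4)$, $(u_4,v_4)$, $(v_4,u_2)$, $(u_2,w_4)$ are combined with the arcs of the paths $v_2 v_3 v_4 w_3 w_4$ and $u_1 u_2 u_3 u_4 u_5$ (and, in case (FF), with the $c$-arcs). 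That is precisely the part of the construction that differs between the four cases and that the gadget is engineered to make delicate---compare Proposition~\ref{TheoremPairGadgetImplications}, which shows that most such combinations are infeasible. Until you exhibit, for each $M$, a concrete completion---e.g.\ $(v_2,u_4,v_4)$, $(w_2,u_2,w_4)$, $(u_1,u_2,u_3)$, $(u_3,u_4,u_5)$ in case (TT)---the existence claim has not been proved.
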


\begin{proof}
   In each of the four cases, the pairs of antiparallel arcs are pairwised joined to a $2$-path.
   Hence, in a \almostDisjointTwoPathDecomposition, the two vertices of each such path may not be connected by any other $2$-path of the decomposition.
   We list the additional $2$-paths for each case as triples $(x,y,z)$ to denote the path $\setdef{(x,xy), (y,z)}$.
   \begin{itemize}
   \item[(TT)]
      $(v_1,v_2,v_3)$, $(v_3,v_4,v_5)$, $(w_1,w_2,w_3)$, $(w_3,w_4,w_5)$, $(v_2,u_4,v_4)$, $(w_2,u_2,w_4)$, $(u_1,u_2,u_3)$ and $(u_3,u_4,u_5)$
   \item[(TF)]
      $(v_1,v_2,v_3)$, $(v_3,v_4,v_5)$, $(w_2,w_3,w_4)$, $(u_1,u_2,w_4)$, $(w_2,u_2,u_3)$, $(v_2,u_4,v_4)$ and $(u_3,u_4,u_5)$
   \item[(FT)]
      $(v_2,v_3,v_4)$, $(w_1,w_2,w_3)$, $(w_3,w_4,w_5)$, $(w_2,u_2,w_4)$, $(u_1,u_2,u_3)$, $(u_3,u_4,v_4)$ and $(v_2,u_4,u_5)$
   \item[(FF)]
      $(v_2,v_3,v_4)$, $(w_2,w_3,w_4)$, $(v_2,u_4,u_5)$, $(u_3,u_4,v_4)$, $(w_2,u_2,c_2)$, $(c_1,u_2,w_4)$ and $(u_1,u_2,u_3)$
   \end{itemize}
   One now carefully verifies that these sets of $2$-paths are indeed \almostDisjointTwoPathDecompositions of $D_{j,\setdef{k,k'}} - M$ for the respective set $M$.
\end{proof}

Note that there also exist corresponding \almostDisjointTwoPathDecompositions for $G(D_{j,\setdef{k,k'}} - M)$ since each \almostDisjointTwoPathDecomposition of a digraph induces one of its associated undirected graph.

\begin{proposition}
   \label{TheoremPairGadgetImplications}
   Let $\X$ be a \almostDisjointTwoPathDecomposition of $G(D_{j,\setdef{k,k'}} - M)$ for some $M \subseteq A'_{j,\setdef{k,k'}}$.
   Then
   \begin{enumerate}[label={(\roman*)}]
   \item
      \label{TheoremPairGadgetImplicationsFirstVariable}
      Either $(v_1,v_2,v_3),(v_3,v_4,v_5) \in \X$ holds or $(v_2,v_3,v_4) \in \X$ and $(v_1,v_2),(v_4,v_5) \in M$ hold.
   \item
      \label{TheoremPairGadgetImplicationsSecondVariable}
      Either $(w_1,w_2,w_3),(w_3,w_4,w_5) \in \X$ holds or $(w_2,w_3,w_4) \in \X$ and $(w_1,w_2),(w_4,w_5) \in M$ hold.
   \item
      \label{TheoremPairGadgetImplicationsClause}
      If $(v_1,v_2),(v_4,v_5),(w_1,w_2),(w_4,w_5) \in M$ then $(c_1,u_2), (u_2,c_2) \notin M$.
   \end{enumerate}
\end{proposition}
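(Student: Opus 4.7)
The plan is to first establish a global structural fact used throughout: in any \almostDisjointTwoPathDecomposition $\X$ of $G(D_{j,\setdef{k,k'}} - M)$, each antiparallel arc pair between vertices $x$ and $y$ (the ``enforcing'' pairs) forces the two parallel edges between $x$ and $y$ in the underlying multigraph to form a single $2$-path of $\X$. If they sat in distinct $2$-paths $P_1, P_2$, both would contain $\setdef{x,y}$, violating almost-disjointness. Consequently, each enforcing pair contributes a forced $2$-path with vertex set $\setdef{x,y}$, and no other path in $\X$ may simultaneously contain $x$ and $y$.

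For~\ref{TheoremPairGadgetImplicationsFirstVariable}, I would classify the $2$-paths covering the compulsory edges $\setdef{v_2,v_3}$ and $\setdef{v_3,v_4}$ by case-splitting on the middle vertex. For $\setdef{v_2,v_3}$: with $v_3$ as middle, every other edge at $v_3$ is consumed by an enforcing pair at $v_3$, leaving only $(v_2,v_3,v_4)$; with $v_2$ as middle, the enforcing pair $v_2$--$u_3$ consumes $\setdef{v_2,u_3}$ and the partner $\setdef{v_2,u_4}$ creates a two-vertex conflict with the enforcing $2$-path between $v_3$ and $u_4$, so only $(v_1,v_2,v_3)$ remains (which requires $(v_1,v_2) \notin M$). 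A symmetric analysis for $\setdef{v_3,v_4}$ leaves $(v_2,v_3,v_4)$ or $(v_3,v_4,v_5)$ (the latter requiring $(v_4,v_5) \notin M$). Combining the two yields the asserted dichotomy. In the ``middle'' case $(v_2,v_3,v_4) \in \X$ one must additionally verify that $\setdef{v_1,v_2}$ and $\setdef{v_4,v_5}$ are absent: $v_1$ has no free edges besides its enforcing pair with $u_4$, and every candidate partner at $v_2$ for $\setdef{v_1,v_2}$ is blocked by $(v_2,v_3,v_4)$, by the $v_2$--$u_3$ enforcing path, or (for the remaining candidate $(v_1,v_2,u_4)$) by two-vertex overlap with the $v_1$--$u_4$ enforcing path. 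The analogous argument at $v_5$ finishes. Statement~\ref{TheoremPairGadgetImplicationsSecondVariable} then follows by the evident $v \leftrightarrow w$ symmetry of the gadget.

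For~\ref{TheoremPairGadgetImplicationsClause}, the hypothesis combined with \ref{TheoremPairGadgetImplicationsFirstVariable} and \ref{TheoremPairGadgetImplicationsSecondVariable} forces $(v_2,v_3,v_4)$ and $(w_2,w_3,w_4)$ into $\X$. I would then pin down the $u_4$-side: $\setdef{v_2,u_4}$ admits only $(v_2,u_4,u_5)$ (the alternatives conflict with $(v_2,v_3,v_4)$ or with the $v_2$--$u_3$ enforcing path), and $\setdef{u_3,u_4}$ together with $\setdef{u_4,v_4}$ cannot be split between $(u_2,u_3,u_4)$ and $(u_4,v_4,u_2)$ since those share $\setdef{u_2,u_4}$, forcing $(u_3,u_4,v_4) \in \X$. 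After these forced paths, every edge at $u_2$ in the residual graph must have $u_2$ as its middle, because the alternative middles $u_1,u_3,v_4,w_4,c_1,c_2$ all lack an admissible second partner. Since each $u_2$-middle path uses two such edges, their count must be even; it is $4$, $5$ or $6$ according to how many of $(c_1,u_2),(u_2,c_2)$ lie outside $M$, so it is $4$ or $6$. When it is $4$, the four neighbors $u_1,u_3,v_4,w_4$ of $u_2$ must split into two disjoint pairs, but every pair avoiding $u_1$ is blocked (by $(u_3,u_4,v_4)$, the $u_3$--$w_4$ enforcing path, or $(w_2,w_3,w_4)$), and only one pair can contain $u_1$. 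Hence the count is $6$, i.e., both $(c_1,u_2)$ and $(u_2,c_2)$ lie outside $M$.

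The main obstacle is keeping track of which admissible $2$-paths conflict with which enforcing pair or earlier forced path; once the mantra ``each enforcing pair consumes its two endpoints for the remaining paths'' is in place, every implication reduces to a short local case distinction.
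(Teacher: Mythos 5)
Your proposal is correct and follows essentially the same strategy as the paper: the antiparallel pairs force $2$-paths that put their endpoints ``in conflict'', which confines the edges of the $v$- and $w$-paths to subpaths of themselves and yields the dichotomies in~(i) and~(ii). For~(iii) you finish with a local parity-and-pairing argument at $u_2$ (after forcing $(u_3,u_4,v_4)$), whereas the paper uses the global parity of the arc set to rule out $|M|=5$ and then derives a contradiction from an undecomposable $4$-cycle on $u_2,u_3,u_4,v_4$; both endgames are valid and of comparable length.
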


\begin{proof}
   We denote by $(x,y,z) \in \X$ the fact that the $2$-path $\setdef{\setdef{x,y}, \setdef{y,z}}$ is contained in $\X$.
   First observe that $\X$ has to contain the $2$-paths corresponding to antiparallel arcs of $D_{j,\setdef{k,k'}}$ since the respective two parallel edges share $2$ vertices.
   For the remainder of the proof we say that vertices $x$ and $y$ are \emph{in conflict} if they are (by previous arguments) part of a $2$-path.

   To see~\ref{TheoremPairGadgetImplicationsFirstVariable}, observe that each of the vertices of the path $P := \setdef{\setdef{v_1, v_2}, \setdef{v_2,v_3}, \setdef{v_3,v_4},\setdef{v_4,v_5}}$ is in conflict with all vertices that do not belong to $P$ but that can be reached by one edge of $P$ and one edge not in $P$ that does not belong to a pair of antiparallel arcs.
   Hence, edges of the path can only be contained in $2$-paths of $\X$ if the $2$-path is a subpath of $P$.
   Since the two middle edges of $P$ are not optional, they must be paired with each other or with the two remaining edges of $P$, which proves~\ref{TheoremPairGadgetImplicationsFirstVariable}.
   The proof of~\ref{TheoremPairGadgetImplicationsSecondVariable} is similar and thus omitted.

   To see~\ref{TheoremPairGadgetImplicationsClause}, assume that there exists a \almostDisjointTwoPathDecomposition $\X$ of $G(D_{j,\setdef{k,k'}} - M)$ for some $M \subseteq A'_{j,\setdef{k,k'}}$ with $|M| > 4$.
   Since $|A_{j,\setdef{k,k'}}|$ is even, we must have $|M| = 6$, i.e., $M = A'_{j,\setdef{k,k'}}$.
   By previous arguments we have $(v_2,v_3,v_4), (w_2,w_3,w_4) \in \X$.
   Now, $v_2$ and $v_4$ are in conflict, and since $v_2$ and $u_3$ are also in conflict, we must have $(v_2,u_4,u_5) \in \X$.
   Similarly, since $w_4$ is in conflict with $w_2$ and with $u_3$, we have $(u_1,u_2,w_4) \in \X$.
   The remaining edges form a cycle of length $4$, which does not have a \almostDisjointTwoPathDecomposition,
   which contradicts the existence of $\X$ and concludes the proof.
\end{proof}

Note that $M \subseteq A'_{j,\setdef{k,k'}}$ has to satify the properties of the above proposition in the directed case as well.

\paragraph{Clause gadget.}
The clause gadget of a clause $C_j$ is a digraph $D_j = (V_j, A_j)$ with 9 vertices $c_1$, $c_2$, $u_1$, $u_2$, $v_1$, $v_2$, $p_1$, $p_2$ and $p_3$, and is depicted in Figure~\ref{FigureClauseGadget}.

\begin{figure}
   \begin{center}
      \begin{tikzpicture}
         \tikzset{vertex/.style = {circle, draw, thick, inner sep=0pt, minimum size=6pt}}
         \tikzset{arc/.style = {->, very thick}}
         \tikzset{optional/.style = {->, very thick, red, dashed}}
         \tikzset{enforcing/.style = {<->, very thick, dotted, green!70!black}}

         \node[vertex,label=left:{$c_1$}] (c1) at (0,0) {};
         \node[vertex,label=above:{$u_1$}] (u1) at (-1,2) {};
         \node[vertex,label=above:{$u_2$}] (u2) at (1,2) {};
         \node[vertex,label=right:{$c_2$}] (c2) at (3,0) {};
         \node[vertex,label=above:{$v_1$}] (v1) at (2,2) {};
         \node[vertex,label=above:{$v_2$}] (v2) at (4,2) {};
         \node[vertex,label=left:{$p_1$}] (p1) at (-0.5,-2) {};
         \node[vertex,label=below:{$p_2$}] (p2) at (1.5,-2) {};
         \node[vertex,label=right:{$p_3$}] (p3) at (3.5,-2) {};
         
         \draw[arc] (u1) -- (c1);
         \draw[arc] (u2) -- (c1);
         \draw[optional] (c1) -- (p1);
         \draw[optional] (c1) -- (p2);
         \draw[optional] (c1) -- (p3);
         \draw[arc] (c2) -- (v1);
         \draw[arc] (c2) -- (v2);
         \draw[optional] (p1) -- (c2);
         \draw[optional] (p2) -- (c2);
         \draw[optional] (p3) -- (c2);

         \draw[enforcing] (u1) -- (u2);
         \draw[enforcing] (v1) -- (v2);
         \draw[enforcing] (p1) -- (p2);
         \draw[enforcing] (p2) -- (p3);
         \draw[enforcing] (p1) to[bend right=30] (p3);
      \end{tikzpicture}
   \end{center}
   \caption{Clause gadget. Dotted double arcs indicate pairs of antiparallel arcs. Only dashed arcs are part of $2$-paths that might use arcs from outside the gadget.}
   \label{FigureClauseGadget}
\end{figure}
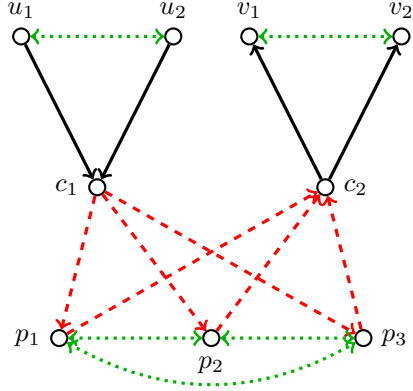

The relevant properties are captured in the following results.
\begin{proposition}
   \label{TheoremClauseGadgetDecompositions}
   There exist \almostDisjointTwoPathDecompositions of $D_j$ and of $D_j - \setdef{ (c_1,p_i), (p_i, c_2) }$ for $i = 1,2,3$, but no \almostDisjointTwoPathDecomposition of $D_j - \setdef{ (c_1,p_1), (p_1,c_2), (c_1,p_2), (p_2,c_2), (c_1,p_3), (p_3,c_2) }$.
\end{proposition}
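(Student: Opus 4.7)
The plan is to establish the three assertions separately. For the two existence claims I would follow the template of Proposition~\ref{TheoremPairGadgetDecompositions} and exhibit explicit \almostDisjointTwoPathDecompositions; for the non-existence claim I would derive a contradiction from exactly the forcing argument for antiparallel-arc pairs that was used inside Proposition~\ref{TheoremPairGadgetImplications}.

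For the existence of a decomposition of $D_j$, I would start with the five mandatory $2$-paths formed by the antiparallel pairs, namely $(u_1,u_2,u_1)$, $(v_1,v_2,v_1)$, $(p_1,p_2,p_1)$, $(p_2,p_3,p_2)$ and $(p_1,p_3,p_1)$, and then combine the remaining ten arcs into the five paths $(u_1,c_1,p_1)$, $(u_2,c_1,p_2)$, $(c_1,p_3,c_2)$, $(p_1,c_2,v_1)$ and $(p_2,c_2,v_2)$. The verification that all 20 arcs are covered exactly once and that no two of the ten $2$-paths share more than one vertex is a mechanical case-check which I would only sketch. Since $p_1,p_2,p_3$ play symmetric roles in $D_j$, I would handle the three cases $D_j - \setdef{(c_1,p_i),(p_i,c_2)}$ by treating $i = 3$ explicitly---simply dropping the $2$-path $(c_1,p_3,c_2)$ from the list above already yields a valid decomposition---and noting that $i = 1$ and $i = 2$ follow by relabeling $p_1,p_2,p_3$.

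For the non-existence statement, I would assume for contradiction that a \almostDisjointTwoPathDecomposition $\X$ of $D_j$ with all six optional arcs deleted exists. Invoking exactly the argument from Proposition~\ref{TheoremPairGadgetImplications}, each antiparallel pair induces two parallel edges in the underlying undirected multigraph that share both endpoints, so each such pair is forced into a single $2$-path of $\X$ with vertex set equal to the pair's two vertices. The arcs left to cover after this forcing are precisely $(u_1,c_1)$, $(u_2,c_1)$, $(c_2,v_1)$ and $(c_2,v_2)$. In the directed setting the contradiction follows at once: $(u_1,c_1)$ admits no extension to a $2$-path, because no arc leaves $c_1$ once the optional arcs are removed, and the only arc entering $u_1$, namely $(u_2,u_1)$, has already been consumed by the forced pair. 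For completeness I would also argue the undirected counterpart, where the only candidate $2$-path covering $\setdef{u_1,c_1}$ together with $\setdef{u_2,c_1}$ is $(u_1,c_1,u_2)$, whose vertex set $\setdef{u_1,c_1,u_2}$ shares the two vertices $u_1$ and $u_2$ with the forced pair-path on $\setdef{u_1,u_2}$.

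The main obstacle is essentially bookkeeping: writing out the explicit $2$-path list for $D_j$ and verifying the pairwise intersection condition for each of the $\binom{10}{2}$ pairs of $2$-paths. I expect no deeper combinatorial difficulty, as the forcing argument in the impossibility case is a near-verbatim reuse of one already proved in Proposition~\ref{TheoremPairGadgetImplications}.
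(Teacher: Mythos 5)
Your proposal is correct and follows essentially the same route as the paper: the same explicit decomposition of $D_j$ that isolates the $2$-path $(c_1,p_3,c_2)$, the three single-deletion cases handled via the case $i=3$ (drop that path) together with the symmetry of $p_1,p_2,p_3$, and the impossibility claim settled by observing that the forced antiparallel-pair paths leave the arc $(u_1,c_1)$ (resp.\ edge $\setdef{u_1,c_1}$) with no admissible partner. If anything, your write-up is more careful than the paper's: the published proof asserts, with slipped indices, that its $D_j$-decomposition also serves for $i=1,2$ even though that decomposition uses the deleted arcs (and its ``$i=3$'' decomposition likewise covers the removed pair), and it dismisses the non-existence claim as easy to see, whereas your relabeling argument and your explicit conflict analysis for $(u_1,c_1)$ close both of these gaps.
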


\begin{proof}
   We again denote by $(x,y,z)$ the $2$-path $\setdef{\setdef{x,y}, \setdef{y,z}}$. 
   A \almostDisjointTwoPathDecomposition of $D_j$ is given by $(u_1,c_1,p_1)$, $(u_2,c_1,p_2)$, $(p_1,c_2,v_1)$, $(p_2,c_2,v_2)$, $(c_1,p_3,c_2)$ and the 2-paths consisting of pairs of antiparallel edges.
   This \almostDisjointTwoPathDecomposition is also a \almostDisjointTwoPathDecomposition of $D_j - \setdef{ (c_1,p_i), (p_i, c_2) }$ for $i=1,2$. 
   A \almostDisjointTwoPathDecomposition of $D_j - \setdef{ (c_1,p_i), (p_i, c_2) }$ with $i=3$ is given by $(u_1,c_1,p_3)$, $(u_2,c_1,p_2)$, $(p_3,c_2,v_1)$, $(p_2,c_2,v_2)$, $(c_1,p_1,c_2)$ and the 2-paths consisting of pairs of antiparallel edges.
   It is easy to see that there is no \almostDisjointTwoPathDecomposition of $D_j - \{ (c_1,p_1)$, $(p_1,c_2)$, $ (c_1,p_2)$, $ (p_2,c_2)$, $ (c_1,p_3)$, $ (p_3,c_2) \}$.
\end{proof}

\paragraph{Reduction.}
For a given \probThreeSAT instance with $n$ variables $x_1, \dotsc, x_n$ and $m$ clauses $\clause_1, \dotsc, \clause_m$ we construct the following digraph $D$.
For each clause $\clause_j = (y_{j,1}, y_{j,2}, y_{j,3})$ we create the clause gadget $D_j$, for each pair $(k,k') \in \setdef{ (1,2), (1,3), (2,3) }$ of literals of this clause we create the pair gadget $D_{j,\setdef{k,k'}}$, and for each variable $x_i$ we create the variable gadget $C_i$.
Now, certain vertices of these gadget graphs are identified with each other.
If two identified vertices are connected by an arc in both gadgets, then also these arcs are identified with each other.

For each pair gadget $D_{j,\setdef{k,k'}}$, the corresponding vertices $c_1$, $c_2$ and $u_2$ are identified with the vertices $c_1$, $c_2$ and $p_i$, respectively, of the clause gadget $D_j$.
Since $\clause_j$ is related to exactly three literal pairs, we can do this such that $p_1$, $p_2$ and $p_3$ of $D_j$ are each identified with the $u_2$ vertex of \emph{exactly one} pair gadget.

For each variable $x_i$ that appears in some literal $y_{j,k}$ of $\clause_j$ and for each $k' \in \setdef{1,2,3} \setminus \setdef{k}$, we identify certain vertices of $D_{j,\setdef{k,k'}}$ with certain vertices of the subpath $P_{i,j,\setdef{k,k'}}$ of $C_i$.
Denote by $q_1, q_2, \dotsc, q_7$ the nodes of $P_{i,j,\setdef{k,k'}}$ in order of traversal.
The identification depends on $k$ and $k'$ and on the type of the literal $y_{j,k}$:
\begin{itemize}
\item[]
   If $k < k'$ and $y_{j,k} = x_i$, identify vertex $v_{\ell}$ of $D_{j,\setdef{k,k'}}$ with vertex $q_\ell$ for $\ell=1,\dotsc,5$.
\item[]
   If $k < k'$ and $y_{j,k} = \bar{x}_i$, identify vertex $v_{\ell}$ of $D_{j,\setdef{k,k'}}$ with vertex $q_{\ell + 1}$ for $\ell=1,\dotsc,5$.
\item[]
   If $k > k'$ and $y_{j,k} = x_i$, identify vertex $w_{\ell}$ of $D_{j,\setdef{k,k'}}$ with vertex $q_\ell$ for $\ell=1,\dotsc,5$.
\item[]
   If $k > k'$ and $y_{j,k} = \bar{x}_i$, identify vertex $w_{\ell}$ of $D_{j,\setdef{k,k'}}$ with vertex $q_{\ell + 1}$ for $\ell=1,\dotsc,5$.
\end{itemize}

Figure~\ref{FigureEntireGadget} is a schematic representation of the entire gadget.

\begin{figure}
   \begin{center}
   \resizebox{\textwidth}{!}{
      \begin{tikzpicture}
         \tikzset{vertex/.style = {circle, draw, thick, inner sep=0pt, minimum size=6pt}}
         \tikzset{arc/.style = {->, very thick}}
         \tikzset{optional/.style = {->, very thick}}
         \tikzset{variableCycle/.style = {->, thick,dashed}}

         \draw[thick,rounded corners=8pt] (-0.7,0) -- (-0.7,4.5) -- (4.7,4.5) -- (4.7,-1) -- (-0.7,-1) -- (-0.7,0);
         \draw[thick,rounded corners=8pt] (6+-0.7,0) -- (6+-0.7,4.5) -- (6+4.7,4.5) -- (6+4.7,-1) -- (6+-0.7,-1) -- (6+-0.7,0);
         \draw[thick,rounded corners=8pt] (12+-0.7,0) -- (12+-0.7,4.5) -- (12+4.7,4.5) -- (12+4.7,-1) -- (12+-0.7,-1) -- (12+-0.7,0);
         \node[label=below:{$D_{j,\setdef{1,2}}$}] (pairGadgetLabel1) at (0.2,3.3) {};
         \node[label=below:{$D_{j,\setdef{1,3}}$}] (pairGadgetLabel2) at (6+0.2,3.3) {};
         \node[label=below:{$D_{j,\setdef{2,2}}$}] (pairGadgetLabel3) at (12+0.2,3.3) {};

         \draw[thick,rounded corners=8pt] (2.3,4) -- (2.3,10.5) -- (15.7,10.5) -- (15.7,3.4) -- (2.3,3.4) -- (2.3,4);    
         \node[label=below:{$D_{j}$}] (clauseGadget) at (5,9.5) {};

         \node[vertex,label=left:{$v_1$}] (v1) at (0.5,0) {};
         \node[vertex,label=left:{$v_2$}] (v2) at (0,2) {};
         \node[vertex,label=above:{$v_3$}] (v3) at (1,2) {};
         \node[vertex,label=above:{$v_4 = w_2$}] (v4w2) at (2,2) {};
         \node[vertex,label=left:{$v_5$}] (v5) at (1.5,0) {};

         \node[vertex,label=left:{$w_1$}] (w1) at (2.5,0) {};
         \node[vertex,label=above:{$w_3$}] (w3) at (3,2) {};
         \node[vertex,label=right:{$w_4$}] (w4) at (4,2) {};
         \node[vertex,label=right:{$w_5$}] (w5) at (3.5,0) {};

         \node[vertex,label=below:{$u_2=p_1$}] (u2) at (3,4) {};

         \node[vertex,label=left:{$c_1$}] (c1) at (8,8) {};
         \node[vertex,label=right:{$c_2$}] (c2) at (10,8) {};

         \draw[optional] (v1) -- (v2);
         \draw[arc] (v2) -- (v3);
         \draw[arc] (v3) -- (v4w2);
         \draw[optional] (v4w2) -- (v5);
         \draw[optional] (w1) -- (v4w2);
         \draw[arc] (v4w2) -- (w3);
         \draw[arc] (w3) -- (w4);
         \draw[optional] (w4) -- (w5);

         \draw[optional] (c1) -- (u2);
         \draw[optional] (u2) -- (c2);

         \node[vertex,label=left:{$v_1$}] (v12) at (6+0.5,0) {};
         \node[vertex,label=left:{$v_2$}] (v22) at (6+0,2) {};
         \node[vertex,label=above:{$v_3$}] (v32) at (6+1,2) {};
         \node[vertex,label=above:{$v_4 = w_2$}] (v4w22) at (6+2,2) {};
         \node[vertex,label=left:{$v_5$}] (v52) at (6+1.5,0) {};

         \node[vertex,label=left:{$w_1$}] (w12) at (6+2.5,0) {};
         \node[vertex,label=above:{$w_3$}] (w32) at (6+3,2) {};
         \node[vertex,label=right:{$w_4$}] (w42) at (6+4,2) {};
         \node[vertex,label=right:{$w_5$}] (w52) at (6+3.5,0) {};

         \node[vertex,label=below:{$u_2=p_2$}] (u22) at (6+3,4) {};

         \draw[optional] (v12) -- (v22);
         \draw[arc] (v22) -- (v32);
         \draw[arc] (v32) -- (v4w22);
         \draw[optional] (v4w22) -- (v52);
         \draw[optional] (w12) -- (v4w22);
         \draw[arc] (v4w22) -- (w32);
         \draw[arc] (w32) -- (w42);
         \draw[optional] (w42) -- (w52);

         \draw[optional] (c1) -- (u22);
         \draw[optional] (u22) -- (c2);

         \node[vertex,label=left:{$v_1$}] (v13) at (12+0.5,0) {};
         \node[vertex,label=left:{$v_2$}] (v23) at (12+0,2) {};
         \node[vertex,label=above:{$v_3$}] (v33) at (12+1,2) {};
         \node[vertex,label=above:{$v_4 = w_2$}] (v4w23) at (12+2,2) {};
         \node[vertex,label=left:{$v_5$}] (v53) at (12+1.5,0) {};

         \node[vertex,label=left:{$w_1$}] (w13) at (12+2.5,0) {};
         \node[vertex,label=above:{$w_3$}] (w33) at (12+3,2) {};
         \node[vertex,label=right:{$w_4$}] (w43) at (12+4,2) {};
         \node[vertex,label=right:{$w_5$}] (w53) at (12+3.5,0) {};

         \node[vertex,label=below:{$u_2=p_3$}] (u23) at (12+3,4) {};

         \draw[optional] (v13) -- (v23);
         \draw[arc] (v23) -- (v33);
         \draw[arc] (v33) -- (v4w23);
         \draw[optional] (v4w23) -- (v53);
         \draw[optional] (w13) -- (v4w23);
         \draw[arc] (v4w23) -- (w33);
         \draw[arc] (w33) -- (w43);
         \draw[optional] (w43) -- (w53);

         \draw[optional] (c1) -- (u23);
         \draw[optional] (u23) -- (c2);

         \draw[variableCycle] (v52) to [bend left=90] node[pos=0.75,above] {$C_1$} (v1);
         \draw[variableCycle] (v5) to [bend right=90] (v12);

         \draw[variableCycle] (v53) to [bend left=90] node[above] {$C_4$} (w1);
         \draw[variableCycle] (w5) to [bend right=90] (v13);

         \draw[variableCycle] (w53) to [bend left=90] node[pos=0.25,above] {$C_5$} (w12);
         \draw[variableCycle] (w52) to [bend right=90] (w13);
      \end{tikzpicture}
      }
   \end{center}
   \caption{%
      Schematic representation of the entire gadget for $C_j = (x_1, x_4, x_5)$.
      The upper part depicts the clause gadget, the middle part depicts the three pair gadgets, and the dashed arcs illustrate the variable cycles this gadget is connected to.
   }
   \label{FigureEntireGadget}
\end{figure}
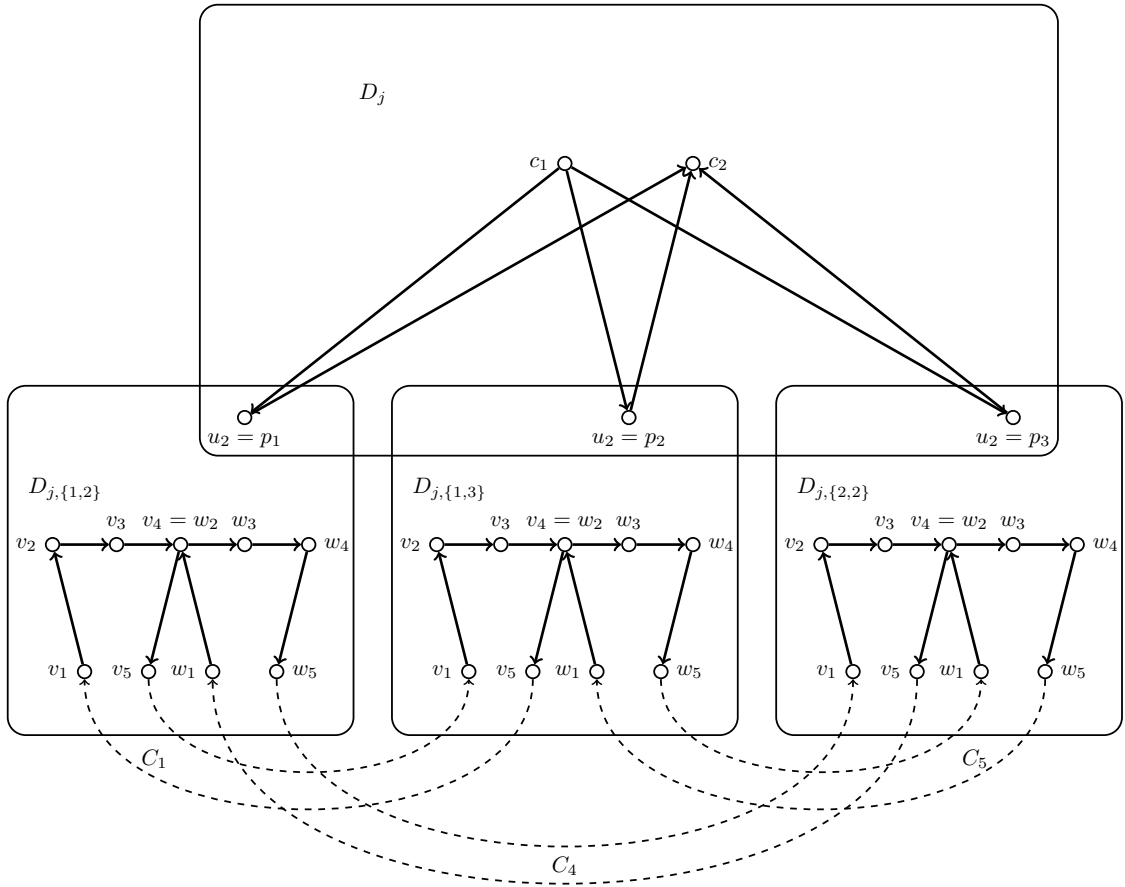

We now establish the correctness of the reduction.

\begin{lemma}
   \label{TheoremHardnessEquivalence}
   Let $D$ be the constructed digraph for a given \probThreeSAT instance.
   Then the following are equivalent.
   \begin{enumerate}[label={(\roman*)}]
   \item
      The \probThreeSAT instance is satisfiable.
   \item
      $D$ admits a \almostDisjointTwoPathDecomposition.
   \item
      $G(D)$ admits a \almostDisjointTwoPathDecomposition.
   \end{enumerate}
\end{lemma}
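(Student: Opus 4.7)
The plan is to prove the cyclic chain $(i) \Rightarrow (ii) \Rightarrow (iii) \Rightarrow (i)$. The implication $(ii) \Rightarrow (iii)$ comes for free, because a directed $2$-path of $D$ induces an undirected $2$-path of $G(D)$ on the same vertex set, and the conflict relation depends only on vertex sets; hence any \almostDisjointTwoPathDecomposition of $D$ is one of $G(D)$ as well.

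For $(i) \Rightarrow (ii)$ I would glue together gadget-level decompositions as dictated by a satisfying assignment $x$. Each variable cycle $C_i$ contributes either its reserved-subpath-internal (\True) or its shifted (\False) decomposition according to $x_i$. Each pair gadget $D_{j,\setdef{k,k'}}$ contributes the decomposition of Proposition~\ref{TheoremPairGadgetDecompositions} for whichever of the cases $(TT)$, $(TF)$, $(FT)$, $(FF)$ is picked out by the truth values of its two literals; here $M$ is precisely the set of optional arcs covered externally (by the neighbouring variable cycles or by the clause gadget). Since every clause contains at least one \True literal, at most one of the three pair gadgets attached to each $\clause_j$ lies in case $(FF)$; the clause gadget can therefore be handled by Proposition~\ref{TheoremClauseGadgetDecompositions}, using the full decomposition of $D_j$ if no pair gadget is in $(FF)$, or the decomposition of $D_j - \setdef{(c_1, p_i), (p_i, c_2)}$ if the unique $(FF)$ pair gadget is identified with $p_i$. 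Verifying that the union of these partial decompositions partitions $A$ and is almost-disjoint amounts to observing that $2$-paths on identified edges coincide between neighbouring gadgets and that the antiparallel pairs inside the pair gadgets preclude conflicts across gadget boundaries.

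For $(iii) \Rightarrow (i)$, given a \almostDisjointTwoPathDecomposition $\X$ of $G(D)$ I would read off a satisfying assignment. First I would show that the restriction of $\X$ to the arcs of each cycle $C_i$ is itself a $2$-path decomposition of $C_i$: at every vertex of $C_i$ that is shared with a pair gadget, each non-$C_i$-edge incident to it either belongs to a forced antiparallel pair or, when paired in a $2$-path with an adjacent $C_i$-edge, would share two vertices with one of the forced antiparallel $2$-paths. Since a cycle of length $12\ell$ admits exactly two $2$-path decompositions, I would set $x_i = \True$ if $\X$ selects the one that decomposes each reserved subpath internally. Proposition~\ref{TheoremPairGadgetImplications}, whose proof relies only on the antiparallel structure inside each pair gadget and therefore transfers verbatim to $G(D)$, then forces each pair gadget into one of the four cases $(TT)$, $(TF)$, $(FT)$, $(FF)$; a short case analysis on the literal type (positive or negative) and on whether $k < k'$ or $k > k'$ verifies that the $v$- or $w$-side is internal exactly when the corresponding literal evaluates to \True under the extracted assignment. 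Finally, if some $\clause_j$ had all three literals false, each of its pair gadgets would be in case $(FF)$, and Proposition~\ref{TheoremPairGadgetImplications}\ref{TheoremPairGadgetImplicationsClause} would force the arcs $(c_1,u_2)$ and $(u_2,c_2)$ of each pair gadget to be covered internally; from the clause gadget's perspective all six clause-side optional arcs would then be externally covered, contradicting Proposition~\ref{TheoremClauseGadgetDecompositions}.

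The main obstacle is the first step of $(iii) \Rightarrow (i)$: establishing that the many additional edges at shared vertices in $G(D)$ do not destroy the rigid pairing structure enforced by the antiparallel pairs inside each pair gadget. This reduces to a careful neighbourhood enumeration at every pair-gadget vertex to check that any $2$-path mixing a $C_i$-edge with a non-$C_i$-edge would share two vertices with one of the forced antiparallel $2$-paths and is therefore forbidden by almost-disjointness.
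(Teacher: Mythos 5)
Your proposal is correct and follows essentially the same route as the paper: the same implication chain, the same gadget-by-gadget construction for $(i)\Rightarrow(ii)$ with the same case split on whether a clause has zero or one pair gadget in case (FF), and the same use of Propositions~\ref{TheoremPairGadgetImplications} and~\ref{TheoremClauseGadgetDecompositions} to extract and verify the assignment in $(iii)\Rightarrow(i)$. The ``main obstacle'' you identify (the neighbourhood enumeration forcing $C_i$-edges to pair only within $C_i$) is exactly what the paper delegates to parts \ref{TheoremPairGadgetImplicationsFirstVariable} and \ref{TheoremPairGadgetImplicationsSecondVariable} of Proposition~\ref{TheoremPairGadgetImplications}, so no new idea is missing.
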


\begin{proof}
   Assume the \probThreeSAT is satisfiable.
   We now construct a \almostDisjointTwoPathDecomposition $\X$ of $D$.
   To this end, for each variable $x_i$ that is \True, decompose the cycle $C_i$ into $\tfrac{1}{2}|C_i|$ $2$-paths such that each of the subpaths $P_{i,j,\setdef{k,k'}}$ contains three such $2$-paths.
   For each variable $x_i$ that is \False, decompose $C_i$ into $2$-paths such that each $P_{i,j,\setdef{k,k'}}$ contains two $2$-paths and intersects exactly one arc of two other $2$-paths.

   For each variable gadget $D_{j,\setdef{k,k'}}$ (assuming $k < k'$ without loss of generality), consider the partial decompositions of Proposition~\ref{TheoremPairGadgetDecompositions} that correspond to the respective truth assignment:
   \begin{itemize}
   \item
      If $y_{j,k}$ and $y_{j,k'}$ are both \True, consider the decomposition from (TT).
   \item
      If $y_{j,k}$ is \True and $y_{j,k'}$ is \False, consider the decomposition from (TF).
   \item
      If $y_{j,k}$ is \False $y_{j,k'}$ is \True, consider the decomposition from (FT).
   \item
      If $y_{j,k}$ and $y_{j,k'}$ are both \False, consider the decomposition from (FF).
   \end{itemize}
   Due to the way the vertices of $D_{j,\setdef{k,k'}}$ and $C_i$ were identified, the decomposition of $D_{j,\setdef{k,k'}}$ aligns with that of $C_i$ defined above.

   Consider a clause $\clause_j$ with at most one \False literal.
   Neither of the considered decompositions as described in Proposition~\ref{TheoremPairGadgetDecompositions} cover the arcs $(c_1,u_2)$ and $(u_2,c_2)$ of the corresponding pair gadget.
   In this case, $\X$ shall contain the \almostDisjointTwoPathDecomposition of $D_j$, which exists by Proposition~\ref{TheoremClauseGadgetDecompositions}. 
   Note that in Proposition~\ref{TheoremClauseGadgetDecompositions} vertex $u_2$ is denoted by $p_i$ for some $i$.

   Consider any other clause $\clause_j$, which is satisfied, and thus contains exactly two \False literals $y_{j,k}$ and $y_{j,k'}$.
   Let $k'' \in \setdef{1,2,3} \setminus \setdef{k,k'}$ be the unique third literal.
   The considered decomposition of $D_{j,\setdef{k,k''}}$ and of $D_{j,\setdef{k',k''}}$ in Proposition~\ref{TheoremPairGadgetDecompositions} do not cover the respective arcs $(c_1,u_2)$ and $(u_2,c_2)$, while the considered decomposition of $D_{j,\setdef{k,k'}}$ covers both arcs.
   Let $\ell \in \setdef{1,2,3}$ be such that vertex $p_{\ell}$ of $D_j$ was identified with vertex $u_2$ of $D_{j,\setdef{k,k'}}$.
   In this case, $\X$ shall contain the \almostDisjointTwoPathDecomposition of $D_j - \setdef{(c_1,p_{\ell}), (p_{\ell}, c_2)}$, which exists by Proposition~\ref{TheoremClauseGadgetDecompositions}.
   By construction, $\X$ is indeed a \almostDisjointTwoPathDecomposition of $D$.

\bigskip
   
   It is easy to see that if $D$ admits a \almostDisjointTwoPathDecomposition then $G(D)$ admits a \almostDisjointTwoPathDecomposition by replacing each directed $2$-path by its undirected counterpart.

\bigskip

   Now assume that $G(D)$ admits a \almostDisjointTwoPathDecomposition $\X$.
   We prove that the \probThreeSAT instance is satisfiable.
   Consider a pair gadget $D_{j,\setdef{k,k'}}$.
   By construction, only the optional arcs can be part of $2$-paths of $\X$ that use arcs not belonging to $D_{j,\setdef{k,k'}}$.
   Hence, $\X$ induces a \almostDisjointTwoPathDecomposition of $D_{j,\setdef{k,k'}} - M$ for some set $M \subseteq A'_{j,\setdef{k,k'}}$ as defined before Proposition~\ref{TheoremPairGadgetDecompositions}.
   Then Proposition~\ref{TheoremPairGadgetImplications} implies that the four arcs that $D_{j,\setdef{k,k'}}$ shares with a variable gadget $C_i$ for some variable $i$ are covered by two or three consecutive $2$-paths of $C_i$.
   This implies that for each variable $x_i$, $\X$ induces a \almostDisjointTwoPathDecomposition of $C_i$.
   The \probThreeSAT solution is constructed by setting $x_i$ to \True if and only if $\X$ induces a \almostDisjointTwoPathDecomposition of each of the reserved subpaths of $C_i$.
	
   It remains to prove that the \probThreeSAT solution is feasible.
   To this end, assume that there exists a clause $\clause_j$ that is not satisfied, i.e., $y_{j,1}$, $y_{j,2}$ and $y_{j,3}$ are \False.
   For each pair $(k,k') \in \setdef{(1,2), (1,3), (2,3)}$, consider the pair gadget $D_{j,\setdef{k,k'}}$ and a set $M \subseteq A'_{j,\setdef{k,k'}}$ such that $\X \setminus M$ induces a \almostDisjointTwoPathDecomposition of $D_{j,\setdef{k,k'}} - M$.
   Due to the way the vertices of $D_{j,\setdef{k,k'}}$ were identified with the variable gadgets, we have $(v_1,v_2), (v_4,v_5), (w_1,w_2), (w_4,w_5) \in M$.
   By Proposition~\ref{TheoremPairGadgetImplications}, this implies $(c_1,u_2),(u_2,c_2) \notin M$, i.e., the two corresponding edges belong to $2$-paths of $D_{j,\setdef{k,k'}}$.
   Since this holds for each pair $(k,k') \in \setdef{(1,2), (1,3), (2,3)}$, the \almostDisjointTwoPathDecomposition $\X$ must induce a \almostDisjointTwoPathDecomposition of the subgraph $D_j \setdef{ (c_1,p_1), (p_1,c_2), (c_1,p_2), (p_2,c_2), (c_1,p_3), (p_3,c_2) }$ of the clause gadget $D_j$.
   By Proposition~\ref{TheoremClauseGadgetDecompositions}, this is impossible, which contradicts the assumption that $\clause_j$ was not satisfied.
\end{proof}

This equivalence implies that hardness results for directed and undirected graphs.
\begin{theorem}
   \label{TheoremHardness}
   The \probAlmostDisjointTwoPathDecomposition problem is $\cplxNP$-complete for directed and undirected graphs.
\end{theorem}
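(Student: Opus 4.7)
The plan is to observe that the theorem follows almost immediately from Lemma~\ref{TheoremHardnessEquivalence} together with a standard membership argument and a size bound on the constructed digraph~$D$. Since the correctness of the reduction is the content of the lemma, the remaining work is the bookkeeping needed to turn the lemma into an NP-completeness statement.

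First I would verify membership in $\cplxNP$, for both the directed and undirected version: given a candidate set $\X$ of $2$-paths, one checks in polynomial time that each element of $\X$ is indeed a $2$-path of the input (di)graph, that the paths partition $E$ (resp.\ $A$), and that no two paths share more than one vertex. This is linear in $|\X|$ times a polynomial check per pair, so a decomposition constitutes a polynomial-size certificate.

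Second, I would argue NP-hardness by bundling the construction from Section~\ref{SectionComplexity}. Given a \probThreeSAT instance on $n$ variables and $m$ clauses, the constructed digraph $D$ consists of $m$ clause gadgets of constant size, $3m$ pair gadgets of constant size, and $n$ variable gadgets, where the variable gadget associated with variable $x_i$ appearing in $\ell_i$ clauses has $12\ell_i$ vertices. Since $\sum_i \ell_i \le 3m$, the total size of $D$ is $\orderO{n + m}$, and $D$ can be built in polynomial time from the \probThreeSAT instance. By Lemma~\ref{TheoremHardnessEquivalence}, $D$ admits a \almostDisjointTwoPathDecomposition if and only if the \probThreeSAT instance is satisfiable, and the same holds for $G(D)$; hence both variants are $\cplxNP$-hard by reduction from \probThreeSAT.

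Combining membership and hardness yields $\cplxNP$-completeness in both the directed and the undirected case. There is no real obstacle here beyond sanity-checking that the construction is polynomial, because Lemma~\ref{TheoremHardnessEquivalence} already does the heavy lifting by establishing the equivalence between satisfiability, directed feasibility, and undirected feasibility simultaneously, so no separate reduction is needed for the undirected version.
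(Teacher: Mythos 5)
Your proposal is correct and follows essentially the same two-step argument as the paper: membership in $\cplxNP$ via polynomial-time verification of a candidate decomposition, and $\cplxNP$-hardness by observing that the gadget construction is polynomial in size and invoking Lemma~\ref{TheoremHardnessEquivalence}. Your explicit $\orderO{n+m}$ size bound on $D$ is a welcome bit of extra bookkeeping that the paper leaves implicit.
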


\begin{proof}
   The \probAlmostDisjointTwoPathDecomposition problem is in $\cplxNP$ as verifying that a given set $\X$ is indeed a \almostDisjointTwoPathDecomposition can be done in polynomial time by checking the definition of a \almostDisjointTwoPathDecomposition.

   Since the reduction described in this section can be carried out in polynomial time, the equivalence in Lemma~\ref{TheoremHardnessEquivalence} yields $\cplxNP$-hardness.
   This concludes the proof.
\end{proof}

We now show that an optimization version of the \probAlmostDisjointTwoPathDecomposition problem is not approximable on general (di)graphs.
Consider the following optimization problem of the \probAlmostDisjointTwoPathDecomposition problem, denoted by \probAlmostDisjointTwoPathDecomposition-MIN.
Given a (di)graph, partition the arcs (resp.\ edges) into paths of length 2 such that the number of paths that pairwise share more than one vertex is minimized.

\begin{corollary}
   \label{TheoremHardnessMinimization}
   \probAlmostDisjointTwoPathDecomposition-MIN is inapproximable for directed and undirected graphs unless $\cplxP = \cplxNP$.
\end{corollary}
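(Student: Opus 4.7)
The plan is to use the reduction from the proof of Theorem~\ref{TheoremHardness} essentially unchanged, exploiting the fact that it produces instances whose optimum is either $0$ or at least $1$. Given a \probThreeSAT instance $\varphi$, I would build exactly the digraph $D$ (resp.\ its underlying graph $G(D)$) from Lemma~\ref{TheoremHardnessEquivalence}. By that lemma, $\varphi$ is satisfiable if and only if a \almostDisjointTwoPathDecomposition of $D$ (resp.\ $G(D)$) exists, which is precisely the statement that the minimum value in \probAlmostDisjointTwoPathDecomposition-MIN equals $0$. When $\varphi$ is unsatisfiable, any partition into $2$-paths must contain at least one pair sharing two vertices, so the minimum value is at least $1$.

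Now suppose for contradiction that there is a polynomial-time $\alpha$-approximation algorithm $\mathcal{A}$ for \probAlmostDisjointTwoPathDecomposition-MIN, for any finite (even input-dependent) ratio $\alpha \geq 1$. Running $\mathcal{A}$ on $D$: if $\varphi$ is satisfiable then the optimum is $0$ and $\mathcal{A}$ must return a feasible solution of value at most $\alpha \cdot 0 = 0$, i.e., a genuine \almostDisjointTwoPathDecomposition; if $\varphi$ is unsatisfiable, the returned value is at least $1$. Inspecting the output of $\mathcal{A}$ therefore decides \probThreeSAT in polynomial time and forces $\cplxP = \cplxNP$.

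One technicality to dispatch is that \probAlmostDisjointTwoPathDecomposition-MIN is only meaningful on inputs that admit at least one $2$-path partition. This is harmless: by the algorithm of Teypaz and Rapine, the existence of such a partition of $D$ can be decided in polynomial time, so if none exists we may immediately declare $\varphi$ unsatisfiable without calling $\mathcal{A}$. The directed and undirected variants are handled uniformly because parts~(ii) and~(iii) of Lemma~\ref{TheoremHardnessEquivalence} are equivalent. The main work has already been done in Section~\ref{SectionComplexity}; the only additional ingredient is the standard inapproximability observation that instances with optimum $0$ leave no room for any multiplicative approximation, which is why the corollary follows with essentially no extra effort.
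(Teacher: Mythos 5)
Your proposal is correct and follows essentially the same route as the paper: both arguments rest on the observation that the constructed instances have optimum value $0$ exactly when the \probThreeSAT formula is satisfiable, so any multiplicative approximation would decide the $\cplxNP$-complete decision problem of Theorem~\ref{TheoremHardness}. Your extra remark about first checking (via Teypaz and Rapine) whether any $2$-path partition exists at all is a sensible technical addition that the paper leaves implicit.
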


\begin{proof}
    This follows directly from Theorem~\ref{TheoremHardness} by observing that any approximation algorithm would be able to solve the decision version of the problem since the (di)graph admits a \almostDisjointTwoPathDecomposition if and only if \probAlmostDisjointTwoPathDecomposition-MIN has optimum value $0$.
\end{proof}


\section{Graph classes related to claw-free graphs}
\label{SectionClawFree}

In this section we characterize classes of graphs and digraphs for which the \probAlmostDisjointTwoPathDecomposition problem can be reduced to the stable-set problem in claw-free graphs.
For such (di)graphs, the problem can consequently be solved in polynomial time using Minty's algorithm~\cite{Minty80}.
The same reduction allows to solve the weighted version using the algorithm of Nakamura and Tamura~\cite{NakamuraT01}.
The reduction is the one mentioned in the introduction, for which we consider the \emph{conflict graph} $\conflictGraph$ of a graph.
As a reminder, for a graph $G$ (resp.\ digraph $D$), we denote by \twoPaths the set of all 2-paths in $G$ (resp.\ $D$).
The \emph{conflict graph} is defined as $\conflictGraph = (\twoPaths,\forbiddenPairs)$, where $\forbiddenPairs$ satisfies~\eqref{EquationForbiddenPairs}.

Note that the \probAlmostDisjointTwoPathDecomposition problem on $G$ (or $D$) can be reduced to the problem of deciding whether the cardinality maximum stable sets in $\conflictGraph$ have a certain size.
Since the maximum stable-set problem can be solved in polynomial time on claw-free graphs using Minty's algorithm~\cite{Minty80}, we conclude that the \probAlmostDisjointTwoPathDecomposition problem can also be solved in polynomial time if we restrict our set of instances to those having a claw-free conflict graph.
The remainder of this section is dedicated to the characterization of this restriction for the original (di)graph.

\paragraph{Undirected graphs.}
Consider an induced subgraph in the conflict graph $\conflictGraph$ that is a claw, as depicted in Figure~\ref{FigureClaw}.
Let $a$, $b$ and $c$ be the nodes of the path $P$ in $G$ that corresponds to the center node of the claw. 
We introduce the notation of $P = (a,b,c)$ as a short version of the undirected path $P = \setdef{\setdef{a,b}, \setdef{b,c}}$ and of the directed path $P = \setdef{(a,b),(b,c)}$.
Since the paths $Q$, $R$ and $S$ corresponding to the other nodes of the claw are not in conflict with each other,
 each of them must share exactly two nodes with $P$, but at most one node with each other.
Without loss of generality, we can assume $V(P) \cap V(Q) = \setdef{a,b}$, $V(P) \cap V(R) = \setdef{a,c}$ and $V(P) \cap V(S) = \setdef{b,c}$.

\begin{figure}[!ht]
   \begin{center}
      \resizebox{0.15\textwidth}{!}{
      \begin{tikzpicture}[
         align=center,
         node/.style = {circle,draw,fill,inner sep=0pt,minimum size=4pt,below=1cm, above=1cm, left=0.5cm, right=0.5cm},
         edge/.style = {very thick,below=1cm, above=1cm, left=0.5cm, right=0.5cm},
      ]
         \node[node,label=above right:{$P$}] (p) at (0,0) {};
         \node[node,label=above:{$Q$}] (q) at ({90+360/3 * (1 - 1)}:1cm) {};
         \draw[edge] (p) -- (q);

         \node[node,label=above:{$R$}] (r) at ({90+360/3 * (2 - 1)}:1cm) {};
         \draw[edge] (p) -- (r);

         \node[node,label=above:{$S$}] (s) at ({90+360/3 * (3 - 1)}:1cm) {};
         \draw[edge] (p) -- (s);
      \end{tikzpicture}
      } 
      \caption{A claw.}
      \label{FigureClaw}
   \end{center}
\end{figure}

This implies
\begin{alignat*}{6}
   Q &\in \setdef{ (x,a,b), (a,x,b), (a,b,x) } && \text{ for some $x \in V \setminus \setdef{c}$}, \\
   R &\in \setdef{ (y,a,c), (a,y,c), (a,c,y) } && \text{ for some $y \in V \setminus \setdef{b}$ and} \\
   S &\in \setdef{ (z,b,c), (b,z,c), (b,c,z) } && \text{ for some $z \in V \setminus \setdef{a}$},
\end{alignat*}
where $x$, $y$ and $z$ are pairwise distinct.
The subgraphs of $G$ corresponding to all possible combinations of the paths $P$, $Q$, $R$ and $S$ are depicted in Figure~\ref{fig:forbiddenSubgraphs}.


\def \coordinateA {0,0}
\def \coordinateB {1,1}
\def \coordinateC {2,0}
\def \coordinateX {0,1}
\def \coordinateY {1,-1}
\def \coordinateZ {2,1}

\tikzset{node/.style = {circle,draw,fill,inner sep=0pt,minimum size=4pt,below=1cm, above=1cm, left=0.5cm, right=0.5cm}}
\newcommand{\nodeA}{\node[node,label=below:{$a$}] (A) at (\coordinateA) {}}
\newcommand{\nodeB}{\node[node,label=$b$] (B) at (\coordinateB) {}}
\newcommand{\nodeC}{\node[node,label=below:{$c$}] (C) at (\coordinateC) {}}
\newcommand{\nodeX}{\node[node,label=$x$] (X) at (\coordinateX) {}}
\newcommand{\nodeY}{\node[node,label=below:{$y$}] (Y) at (\coordinateY) {}}
\newcommand{\nodeZ}{\node[node,label=$z$] (Z) at (\coordinateZ) {}}

\newcommand{\drawNodes}{\nodeA; \nodeB; \nodeC; \nodeX; \nodeY; \nodeZ}

\tikzset{drawStyleP/.style = {solid, very thick}}
\newcommand{\pathP}{\draw[drawStyleP] (A) -- (B) (B) -- (C)}

\tikzset{drawStyleQ/.style = {solid, very thick}}
\newcommand{\pathQ}[1]{
	\ifdefequal{#1}{1}{\draw[drawStyleQ] (X) --(A) (A) -- (B)}{
		\ifdefequal{#1}{2}{\draw[drawStyleQ] (A) --(X) (X) -- (B)}{\draw[drawStyleQ] (A) -- (B) (B) -- (X)}	
	}	
}
\tikzset{drawStyleR/.style = {solid, very thick}}
\newcommand{\pathR}[1]{
	\ifdefequal{#1}{1}{\draw[drawStyleR] (Y) --(A) (A) -- (C)}{
		\ifdefequal{#1}{2}{\draw[drawStyleR](A) --(Y) (Y) -- (C)}{\draw[drawStyleR] (A) --(C) (C) -- (Y)}	
	}
}
\tikzset{drawStyleS/.style = {solid, very thick}}
\newcommand{\pathS}[1]{
	\ifdefequal{#1}{1}{\draw[drawStyleS] (Z) --(B) (B) -- (C)}{
		\ifdefequal{#1}{2}{\draw[drawStyleS] (B) --(Z) (Z) -- (C)}{\draw[drawStyleS] (B) -- (C) (C) -- (Z)}	
	}
}

\newcounter{forbiddenSubgraph}
\setcounter{forbiddenSubgraph}{0}

\begin{figure}[!ht]
\begin{center}
\foreach \q in {1,2,3}
{
	\foreach \r in {1,2,3}
	{
		\foreach \s in {1,2,3}
		{
		\begin{subfigure}[t]{0.14\textwidth}
		\stepcounter{forbiddenSubgraph}
		\resizebox{\textwidth}{!}{
		\begin{tikzpicture}
			\drawNodes;
			\pathP;
			\ifthenelse{\equal{\q}{1}}{\pathQ{1};}{\ifthenelse{\equal{\q}{2}}{\pathQ{2};}{\pathQ{3};}}
			\ifthenelse{\equal{\r}{1}}{\pathR{1};}{\ifthenelse{\equal{\r}{2}}{\pathR{2};}{\pathR{3};}}
			\ifthenelse{\equal{\s}{1}}{\pathS{1};}{\ifthenelse{\equal{\s}{2}}{\pathS{2};}{\pathS{3};}}
		\end{tikzpicture}
		} 
		\subcaption{\ }
		\label{fig:forbiddenSubgraph\theforbiddenSubgraph}
		\end{subfigure}
		}
	}
}
\caption{Subgraphs resulting in a claw in the corresponding conflict graph $H$}
\label{fig:forbiddenSubgraphs}
\end{center}
\end{figure}
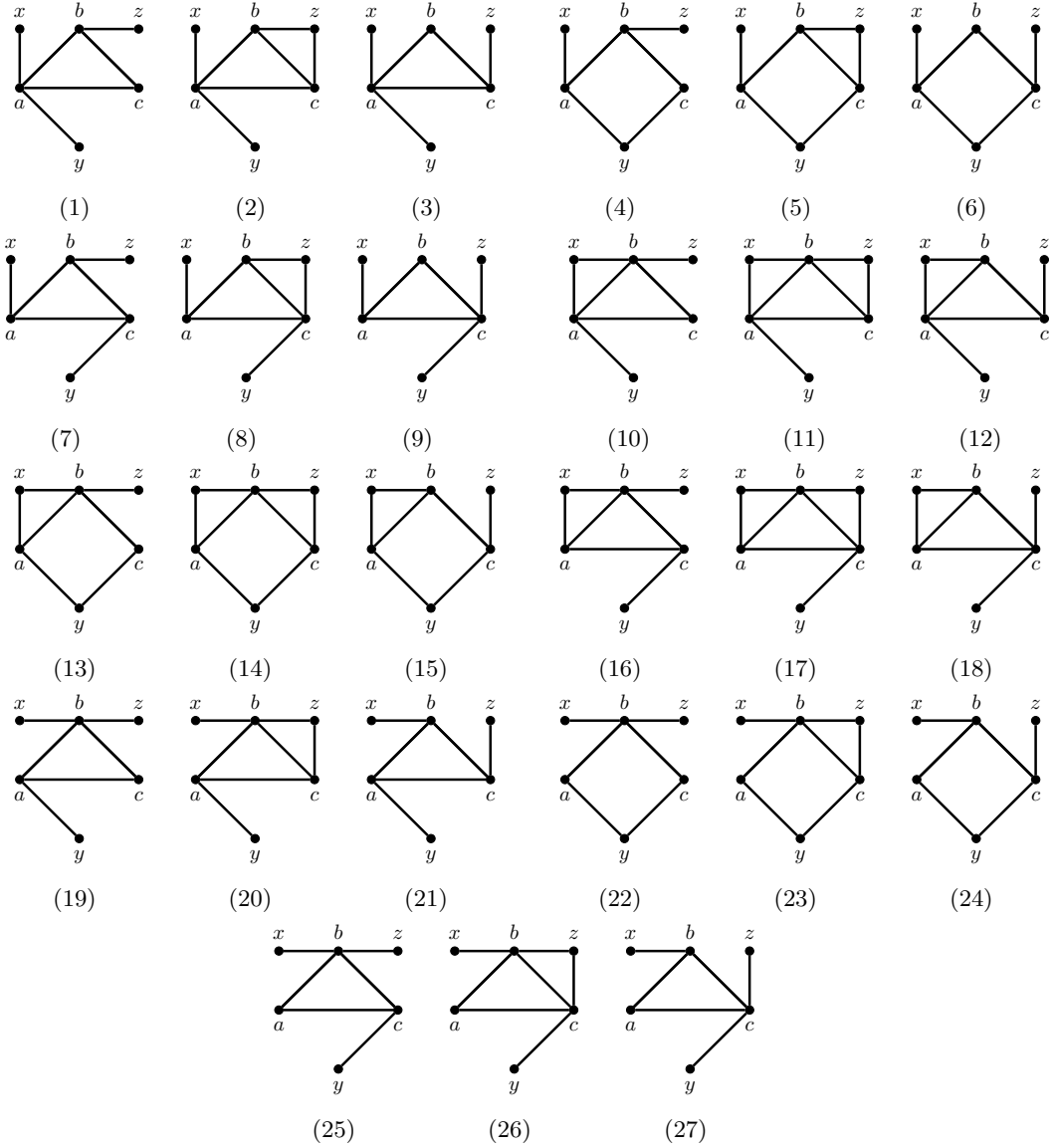

\DeclareDocumentCommand\clawfreeUndirectedForbidden{}{\ensuremath{\mathfrak{F}_{\text{und}}}\xspace}

Since all graphs in Figure~\ref{fig:forbiddenSubgraphs} contain Graph~\subref{fig:forbiddenSubgraph1}, \subref{fig:forbiddenSubgraph3},
\subref{fig:forbiddenSubgraph4}, \subref{fig:forbiddenSubgraph6}, \subref{fig:forbiddenSubgraph7}, \subref{fig:forbiddenSubgraph9},
\subref{fig:forbiddenSubgraph19} or~\subref{fig:forbiddenSubgraph22} as a subgraph,
the list of minimal forbidden subgraphs, denoted by \clawfreeUndirectedForbidden{}, contains exactly these seven graphs.


\def \coordinateA {0,0}
\def \coordinateB {1,1}
\def \coordinateC {2,0}
\def \coordinateX {0,1}
\def \coordinateY {1,-1}
\def \coordinateZ {2,1}

\tikzset{drawStyleQAdditional/.style = {very thick,dashed}}
\newcommand{\pathQAdd}[1]{
	\ifdefequal{#1}{1}{\draw[drawStyleQAdditional] (X) -- (B)}{
		\draw[drawStyleQAdditional] (X) -- (A)}
}

\tikzset{drawStyleSAdditional/.style = {very thick,dotted}}
\newcommand{\pathSAdd}[1]{
	\ifdefequal{#1}{1}{\draw[drawStyleSAdditional] (Z) -- (C)}{
		\draw[drawStyleSAdditional] (Z) -- (B)}
}

\newcounter{forbiddenSubgraphUndirected}
\setcounter{forbiddenSubgraphUndirected}{0}

\begin{figure}[!ht]
\begin{center}
\foreach \q in {1,3}
{
	\foreach \r in {1,2}
	{
		\foreach \s in {1,3}
		{
		\ifthenelse{\equal{\q}{3} \AND \equal{\r}{2} \AND \equal{\s}{3}}{\ }{
		\begin{subfigure}[t]{0.14\textwidth}
		\stepcounter{forbiddenSubgraphUndirected}
		\resizebox{\textwidth}{!}{
		\begin{tikzpicture}
			\drawNodes;
			\pathP;
			\ifthenelse{\equal{\q}{1}}{\pathQ{1};}{\ifthenelse{\equal{\q}{2}}{\pathQ{2};}{\pathQ{3};}}
			\ifthenelse{\equal{\r}{1}}{\pathR{1};}{\ifthenelse{\equal{\r}{2}}{\pathR{2};}{\pathR{3};}}
			\ifthenelse{\equal{\s}{1}}{\pathS{1};}{\ifthenelse{\equal{\s}{2}}{\pathS{2};}{\pathS{3};}}
		\end{tikzpicture}
		} 
		\subcaption{\ }
		\label{fig:forbiddenSubgraphUndirected\theforbiddenSubgraphUndirected}
		\end{subfigure}
		}
		}
	}
	
}
\caption{Forbidden sugraphs in undirected graphs.}
\label{fig:fewerForbiddenSubgraphsUndirected}
\end{center}
\end{figure}

This proves the following structural result for undirected graphs.
\begin{theorem}
   \label{thm:charForbiddenSubgraphsUndirected}
   Let $G$ be a graph and let $\conflictGraph$ be its conflict graph.
   Then $\conflictGraph$ contains a claw if and only if $G$ contains one of the graphs from $\clawfreeUndirectedForbidden$ as a subgraph.
\end{theorem}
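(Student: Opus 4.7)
My plan treats the two directions separately; the forward direction formalizes the discussion immediately preceding the theorem, and the converse is a direct verification.

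\textbf{Forward direction.} Suppose $\conflictGraph$ contains a claw with center $P$ and pendants $Q, R, S$. I would first show that each of $Q, R, S$ shares exactly two vertices with $P$: if $V(Q) \supseteq V(P)$, then $V(P) \cap V(R) \subseteq V(Q)$ together with $|V(P) \cap V(R)| \geq 2$ forces $|V(Q) \cap V(R)| \geq 2$, contradicting the non-conflict of $Q$ and $R$. Writing $P = (a,b,c)$, a similar argument shows that the three shared pairs $V(P) \cap V(Q)$, $V(P) \cap V(R)$, $V(P) \cap V(S)$ are pairwise distinct (otherwise two leaves would share at least two vertices), so after relabeling they are $\{a,b\}$, $\{a,c\}$, $\{b,c\}$. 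The external vertices $x, y, z$ of $Q, R, S$ then lie outside $V(P)$ and are pairwise distinct. Choosing the middle vertex for each of $Q, R, S$ in three possible ways yields the twenty-seven configurations depicted in Figure~\ref{fig:forbiddenSubgraphs}.

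The remaining step is to verify that each configuration contains a member of $\clawfreeUndirectedForbidden$ as a subgraph. The decisive reduction exploits the fact that $\{a,b\}$ is already an edge of $P$: a type-$2$ placement $Q = (a,x,b)$, contributing edges $\{ax, xb\}$, contains the type-$1$ variant $Q = (x,a,b)$, whose edges $\{xa, ab\}$ are already present via $ax = xa$ and the $P$-edge $ab$. The same reduction applies to $S$, so attention may be restricted to the twelve configurations with $Q, S \in \{1, 3\}$. Invoking the symmetry $a \leftrightarrow c$ (which swaps the roles of $Q$ and $S$ and exchanges types $1$ and $3$, while fixing $R$'s type $2$ and $y$), these twelve configurations fall into exactly seven orbits, which I would match one-to-one with the seven graphs listed in $\clawfreeUndirectedForbidden$.

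\textbf{Converse.} Each $F \in \clawfreeUndirectedForbidden$ is, by construction, labelled with four explicit $2$-paths $P, Q, R, S$ whose pairwise vertex-intersection pattern is precisely that of a claw in $\conflictGraph$ (center $P$, leaves $Q, R, S$). Hence whenever $F$ is a subgraph of $G$, the same four $2$-paths lie in $\twoPaths(G)$ and induce a claw there.

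\textbf{Main obstacle.} The principal difficulty lies in the forward case analysis: the reduction used for $Q$ and $S$ fails for the ``diagonal'' leaf $R$, because $\{a, c\}$ is \emph{not} an edge of $P$. In particular, a type-$2$ placement $R = (a, y, c)$, contributing edges $\{ay, yc\}$, does not contain either the type-$1$ or the type-$3$ placement (both of which require the edge $ac$) as a subgraph. Consequently, the $R$-index must be retained throughout, and the final match between the twelve minimal configurations and the seven listed graphs requires a finite but explicit orbit check rather than a one-line argument.
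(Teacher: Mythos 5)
Your proof is correct and follows essentially the same route as the paper: a claw in $\conflictGraph$ is reduced to the $27$ labelled configurations of Figure~\ref{fig:forbiddenSubgraphs}, the type-$2$ placements of $Q$ and $S$ are discarded by subgraph containment, the remaining twelve configurations are identified up to the $a\leftrightarrow c$ relabelling to yield the seven graphs of $\clawfreeUndirectedForbidden$, and the converse is checked directly on those seven graphs. Your explicit orbit count is, if anything, slightly tidier bookkeeping than the paper's own list of eight representatives (two of which are isomorphic), but the underlying argument is the same.
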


We immediately obtain the following complexity result for undirected graphs.
\begin{corollary}
   \label{thm:cplxForbiddenSubgraphsUndirected}  
   The (weighted) \probAlmostDisjointTwoPathDecomposition problem for undirected graphs that do not contain a subgraph from $\clawfreeUndirectedForbidden$ can be solved in polynomial time.
\end{corollary}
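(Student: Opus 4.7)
The plan is to combine Theorem~\ref{thm:charForbiddenSubgraphsUndirected} with the polynomial-time algorithms for (weighted) maximum stable set on claw-free graphs. First I would observe that if $G$ contains no subgraph from $\clawfreeUndirectedForbidden$, then by Theorem~\ref{thm:charForbiddenSubgraphsUndirected} the conflict graph $\conflictGraph = (\twoPaths, \forbiddenPairs)$ is claw-free. Moreover, $|\twoPaths| = \orderO{|V|^3}$ and $\conflictGraph$ can be constructed in polynomial time by enumerating all $2$-paths of $G$ and checking pairwise conflicts according to~\eqref{EquationForbiddenPairs}.

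Next I would argue that the (weighted) \probAlmostDisjointTwoPathDecomposition problem on $G$ is equivalent to finding a (minimum-weight) stable set of size exactly $|E|/2$ in $\conflictGraph$. Indeed, any two $2$-paths sharing an edge share at least two vertices and are thus adjacent in $\conflictGraph$, so every stable set in $\conflictGraph$ consists of pairwise edge-disjoint, pairwise almost-disjoint $2$-paths; conversely, since such a stable set uses $2 |\X|$ edges of $G$, it forms a \almostDisjointTwoPathDecomposition precisely when it has cardinality $|E|/2$. In particular, the maximum stable set of $\conflictGraph$ has size at most $|E|/2$, with equality iff a \almostDisjointTwoPathDecomposition exists.

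For the unweighted version, I would simply invoke Minty's algorithm~\cite{Minty80} on $\conflictGraph$, compute a maximum stable set, and compare its cardinality with $|E|/2$. For the weighted version, I would reweight each $P \in \twoPaths$ by $c'_P := M - c_P$ for a sufficiently large constant $M$ (e.g.\ $M := 1 + |E| \cdot \max_{P \in \twoPaths} |c_P|$), so that every stable set of size $k+1$ has $c'$-value strictly larger than every stable set of size $k$. Running the algorithm of Nakamura and Tamura~\cite{NakamuraT01} on $\conflictGraph$ with weights $c'$ then yields a stable set $\X^\ast$ of maximum possible cardinality and, subject to that, minimum total $c$-weight. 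If $|\X^\ast| = |E|/2$, it is an optimal weighted \almostDisjointTwoPathDecomposition; otherwise none exists.

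No step looks like a real obstacle: the structural work has already been done in Theorem~\ref{thm:charForbiddenSubgraphsUndirected}, and the algorithmic ingredients are off-the-shelf. The only point that needs a sentence of justification is the reduction from the cardinality-constrained stable-set problem to the plain (weighted) maximum stable-set problem via the $M$-shift, which is standard and produces a polynomially bounded encoding of $c'$.
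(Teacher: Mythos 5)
Your proof is correct and follows essentially the same route as the paper: use Theorem~\ref{thm:charForbiddenSubgraphsUndirected} to conclude that $\conflictGraph$ is claw-free, reduce the problem to finding a stable set of size $|E|/2$ in $\conflictGraph$, and apply Minty's algorithm (resp.\ the Nakamura--Tamura algorithm for the weighted case). The paper merely asserts the existence of a modified objective $c'$ whose maximum stable sets are the $c$-optimal ones of size $|E|/2$; your explicit $M$-shift is precisely the intended construction.
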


\begin{proof}
   Let $G$ be a graph that does not contain any subgraph in $\clawfreeUndirectedForbidden$.
   We construct the conflict graph $\conflictGraph$ of $G$ in polynomial time.
   By Theorem~\ref{thm:charForbiddenSubgraphsUndirected}, $\conflictGraph$ is claw-free.
   The (weighted) \probAlmostDisjointTwoPathDecomposition-problem on $G$ is therefore equivalent to solving the maximum (weight) stable-set problem on $\conflictGraph$.
   Using Minty's Algorithm~\cite{Minty80} we can find a stable set of size $|E|/2$ if such a set exists.
   In the affirmative case, we can derive from a given objective vector $c \in \R^{\twoPaths}$ a vector $c' \in \R^{\twoPaths}$ such that the $c$-maximum stable sets of size $|E|/2$ are exactly the $c'$-maximum stable sets.
   Using the algorithm by Nakamura and Tamura~\cite{NakamuraT01} we can then find such a set and derive the corresponding \almostDisjointTwoPathDecomposition from it.
\end{proof}

\paragraph{Directed graphs.}

We now turn to the directed case.
The following lemma shows that a claw in the conflict graph $\conflictGraph$ of $D$ induces a claw in the conflict graph $\conflictGraph'$ of the underlying undirected multi-graph $G(D)$.

\begin{lemma}
   \label{TheoremSubgraphsYieldClaw}
   Let $D$ be a digraph, let $\conflictGraph$ be its conflict graph, and let $\conflictGraph'$ be the conflict graph of $G(D)$.
   Then each induced claw in $\conflictGraph$ induces a claw in $\conflictGraph'$.
\end{lemma}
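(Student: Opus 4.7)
The plan is to exhibit a natural injection $\phi \colon \twoPaths(D) \to \twoPaths(G(D))$ that preserves vertex sets, and thereby preserves conflicts; applied to the four paths of an induced claw in $\conflictGraph$, this map will produce an induced claw in $\conflictGraph'$.

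First I would define $\phi$ as follows: for a directed $2$-path $P = \setdef{(a,b),(b,c)} \in \twoPaths(D)$, set $\phi(P) := \setdef{e_{(a,b)}, e_{(b,c)}}$, where $e_{(x,y)}$ denotes the undirected edge of the multi-graph $G(D)$ arising from the arc $(x,y)$ (so that antiparallel arcs produce parallel, and hence distinct, edges, as stipulated in the preliminaries). Since $a, b, c$ are pairwise distinct, $\phi(P)$ is a genuine undirected $2$-path with vertex set $V(\phi(P)) = \setdef{a,b,c} = V(P)$.

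Next I would record two immediate consequences of this definition: (a) pairwise vertex-set intersections are preserved by $\phi$, that is, $|V(\phi(P)) \cap V(\phi(Q))| = |V(P) \cap V(Q)|$ for all $P, Q \in \twoPaths(D)$; and (b) $\phi$ is injective, because distinct directed $2$-paths have distinct arc sets and the assignment $a \mapsto e_a$ is itself injective on arcs. Given now an induced claw $\setdef{P, Q, R, S}$ in $\conflictGraph$ with center $P$, the conflict relation is defined purely through $|V(\cdot) \cap V(\cdot)| \geq 2$, so~(a) yields that $\phi(P)$ is in conflict with each of $\phi(Q), \phi(R), \phi(S)$ in $\conflictGraph'$, while the latter three are pairwise non-conflicting. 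Injectivity~(b) ensures that the four images are pairwise distinct vertices of $\conflictGraph'$, hence $\setdef{\phi(P), \phi(Q), \phi(R), \phi(S)}$ is an induced claw in $\conflictGraph'$, as claimed.

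The only delicate point — and the one I would watch most carefully — is the injectivity step: without the multi-graph convention on antiparallel arcs, the two directed $2$-paths $\setdef{(a,b),(b,c)}$ and $\setdef{(c,b),(b,a)}$ would collapse to the same undirected $2$-path of $G(D)$, and the image of a claw might have fewer than four vertices. Because the preliminaries explicitly distinguish the parallel edges obtained from antiparallel arcs, this collapse does not occur and the argument goes through unchanged.
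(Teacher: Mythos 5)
Your proof is correct and takes essentially the same route as the paper: map each directed $2$-path to its undirected counterpart, observe that conflict is determined solely by vertex sets, which the map preserves, and transfer the claw. The only difference is that you explicitly verify injectivity of the map under the multigraph convention on antiparallel arcs, a point the paper's proof leaves implicit.
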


\begin{proof}
   Let $\conflictGraph = (\twoPaths,\forbiddenPairs)$ and $\conflictGraph' = (\twoPaths',\forbiddenPairs')$ and consider two paths $P, Q \in \twoPaths$ with $P = \setdef{(a,b),(b,c)}$ and $Q = \setdef{(x,y),(y,z)}$. 
   Define the two undirected paths $P' := \setdef{\setdef{a,b},\setdef{b,c}}$ and $Q' := \setdef{\setdef{x,y},\setdef{y,z}}$.
   Since being in conflict is a property of the nodes of a path, $P$ and $Q$ are in conflict in $D$ if and only if $P'$ and $Q'$ are in conflict in $G(D)$.
   Thus, a claw in $\conflictGraph$ having four $2$-paths in $D$ as its nodes induces a claw in $\conflictGraph'$ having the corresponding undirected $2$-paths in $G(D)$ as its nodes, which proves the claim.
\end{proof}

Due to Lemma~\ref{TheoremSubgraphsYieldClaw} we restrict our analysis to graphs that are directed versions of the graphs depicted in Figure~\ref{fig:forbiddenSubgraphs}. 
Furthermore, out of these graphs we are only interested in graphs that result in a claw in $\conflictGraph$.
In particular, those edges that are combined to paths $Q$, $R$ and $S$ have to be directed such that the directed counterparts also form paths.
Therefore, we are not interested in all possible directed versions of the depicted graphs. 
For example, consider the Graph~\subref{fig:forbiddenSubgraph1} in Figure~\ref{fig:forbiddenSubgraphs} and assume for a directed version $D = (V,A)$ of it, $(a,x),(a,b) \in A$.
Then neither $Q = (x,a,b)$ is a path in $D$ nor $Q' = (b,a,x)$ and therefore $Q, Q' \notin \twoPaths$.
Note that the node corresponding to the (undirected) path $(x,a,b)$ in $\conflictGraph'$ is part of a claw. 
Furthermore, note that as $Q, Q' \notin \twoPaths$ there is no corresponding claw in $\conflictGraph$.
Hence, we are interested in directed versions of graphs depicted in Figure~\ref{fig:forbiddenSubgraphs} that contain the directed paths $Q$, $R$ and $S$ with
\begin{alignat*}{6}
  Q &\in \setdef{(x,a,b),(b,a,x),(a,x,b),(b,x,a),(a,b,x),(x,b,a)}, \\
  R &\in \setdef{(y,a,c),(c,a,y),(a,y,c),(c,y,a),(a,c,y),(y,c,a)} \text{ and} \\
  S &\in \setdef{(z,b,c),(c,b,y),(b,z,c), (c,z,b), (b,c,z),(z,b,c)}.
\end{alignat*}
These digraphs are depicted in Figure~\ref{fig:fewerForbiddenSubgraphsDirected}. 


\def \coordinateA {0,0}
\def \coordinateB {1,1}
\def \coordinateC {2,0}
\def \coordinateX {0,1}
\def \coordinateY {1,-1}
\def \coordinateZ {2,1}

\tikzset{node/.style = {circle,draw,fill,inner sep=0pt,minimum size=4pt,below=1cm, above=1cm, left=0.5cm, right=0.5cm}}

\tikzset{drawStyleP/.style = {solid,very thick,->}}
\newcommand{\diPathP}{\draw[drawStyleP] (A) -- (B); \draw[drawStyleP] (B) -- (C)}

\tikzset{drawStyleQ/.style = {solid,very thick,->}}
\newcommand{\diPathQ}[2]{
	\ifdefequal{#1}{1}{
		\ifdefequal{#2}{1}{
			\draw[drawStyleQ] (X) --(A)}{
			\draw[drawStyleQ] (B) -- (X)}}{
	\ifdefequal{#1}{2}{
		\ifdefequal{#2}{1}{
			\draw[drawStyleQ] (A) --(X); \draw[drawStyleQ] (X) -- (B)}{
			\draw[drawStyleQ] (B) --(X); \draw[drawStyleQ] (X) -- (A)}}{
	\ifdefequal{#2}{1}{
		\draw[drawStyleQ] (B) -- (X)}{
		\draw[drawStyleQ] (X) -- (B)}}
	}	
}

\tikzset{drawStyleR/.style = {solid,very thick,->}}
\newcommand{\diPathR}[2]{
	\ifdefequal{#1}{1}{
		\ifdefequal{#2}{1}{
			\draw[drawStyleR] (Y) --(A); \draw[drawStyleR] (A) -- (C)}{
			\draw[drawStyleR] (C) -- (A); \draw[drawStyleR] (A) -- (Y)}}{
	\ifdefequal{#1}{2}{
		\ifdefequal{#2}{1}{
			\draw[drawStyleR] (A) --(Y); \draw[drawStyleR] (Y) -- (C)}{
			\draw[drawStyleR] (C) --(Y); \draw[drawStyleR] (Y) -- (A)}}{
	\ifdefequal{#2}{1}{
		\draw[drawStyleR] (A) --  (C); \draw[drawStyleR] (C) -- (Y)}{
		\draw[drawStyleR] (Y) -- (C); \draw[drawStyleR] (C) -- (A)}}
	}	
}

\tikzset{drawStyleS/.style = {solid,very thick,->}}
\newcommand{\diPathS}[2]{
	\ifdefequal{#1}{1}{
		\ifdefequal{#2}{1}{
			\draw[drawStyleS] (Z) --(B)}{
			\draw[drawStyleS] (C) -- (Z)}}{
	\ifdefequal{#1}{2}{
		\ifdefequal{#2}{1}{
			\draw[drawStyleS] (B) --(Z); \draw[drawStyleS] (Z) -- (C)}{
			\draw[drawStyleS] (C) --(Z); \draw[drawStyleS] (Z) -- (B)}}{
	\ifdefequal{#2}{1}{
		\draw[drawStyleS] (C) -- (Z)}{
		\draw[drawStyleS] (Z) -- (C)}}
	}	
}

\setcounter{forbiddenSubgraph}{0}

\begin{figure}[!ht]
   \renewcommand\thesubfigure{[\arabic{subfigure}]}
\begin{center}
	\foreach \q in {1,2}
	{
	\foreach \r in {1,2}
	{
		\foreach \s in {1,2}
		{
		\begin{subfigure}[t]{0.14\textwidth}
		\stepcounter{forbiddenSubgraph}
		\resizebox{\textwidth}{!}{
		\begin{tikzpicture}
			\drawNodes;
			\diPathP;
			\ifthenelse{\equal{\q}{1}}{\diPathQ{1}{1};}{\diPathQ{2}{1};}
			\ifthenelse{\equal{\r}{1}}{\diPathR{1}{1};}{\diPathR{1}{2};}
			\ifthenelse{\equal{\s}{1}}{\diPathS{1}{1};}{\diPathS{2}{1};}
		\end{tikzpicture}
		} 
		\subcaption{\ }
		\label{fig:forbiddenSubgraphDirected\theforbiddenSubgraph}
		\end{subfigure}
		}
	}
	}
	\foreach \q in {1,2}
	{
	\foreach \r in {1,2}
	{
		\begin{subfigure}[t]{0.14\textwidth}
		\stepcounter{forbiddenSubgraph}
		\resizebox{\textwidth}{!}{
		\begin{tikzpicture}
			\drawNodes;
			\diPathP;
			\ifthenelse{\equal{\q}{1}}{\diPathQ{1}{1};}{\diPathQ{2}{1};}
			\ifthenelse{\equal{\r}{1}}{\diPathR{1}{1};}{\diPathR{1}{2};}
			\diPathS{1}{2};
		\end{tikzpicture}
		} 
		\subcaption{\ }
		\label{fig:forbiddenSubgraphDirected\theforbiddenSubgraph}
		\end{subfigure}
	}
	}
	\foreach \q in {1,2}
	{
	\foreach \r in {1,2}
	{
		\foreach \s in {1,2}
		{
		\begin{subfigure}[t]{0.14\textwidth}
		\stepcounter{forbiddenSubgraph}
		\resizebox{\textwidth}{!}{
		\begin{tikzpicture}
			\drawNodes;
			\diPathP;
			\ifthenelse{\equal{\q}{1}}{\diPathQ{1}{1};}{\diPathQ{2}{1};}
			\ifthenelse{\equal{\r}{1}}{\diPathR{2}{1};}{\diPathR{2}{2};}
			\ifthenelse{\equal{\s}{1}}{\diPathS{1}{1};}{\diPathS{2}{1};}
		\end{tikzpicture}
		} 
		\subcaption{\ }
		\label{fig:forbiddenSubgraphDirected\theforbiddenSubgraph}
		\end{subfigure}
		}
	}
	}
	\foreach \r in {1,2}
	{
		\begin{subfigure}[t]{0.14\textwidth}
		\stepcounter{forbiddenSubgraph}
		\resizebox{\textwidth}{!}{
		\begin{tikzpicture}
			\drawNodes;
			\diPathP;
			\diPathQ{1}{1};
			\ifthenelse{\equal{\r}{1}}{\diPathR{2}{1};}{\diPathR{2}{2};}
			\diPathS{1}{2};
		\end{tikzpicture}
		} 
		\subcaption{\ }
		\label{fig:forbiddenSubgraphDirected\theforbiddenSubgraph}
		\end{subfigure}
	}
	\foreach \s in {1,2}
	{
	\foreach \r in {1,2}
	{
		\begin{subfigure}[t]{0.14\textwidth}
		\stepcounter{forbiddenSubgraph}
		\resizebox{\textwidth}{!}{
		\begin{tikzpicture}
			\drawNodes;
			\diPathP;
			\diPathQ{1}{2};
			\ifthenelse{\equal{\r}{1}}{\diPathR{1}{1};}{\diPathR{1}{2};}
			\ifthenelse{\equal{\s}{1}}{\diPathS{1}{1};}{\diPathS{2}{1};}
		\end{tikzpicture}
		} 
		\subcaption{\ }
		\label{fig:forbiddenSubgraphDirected\theforbiddenSubgraph}
		\end{subfigure}
	}
	}
	\foreach \s in {1,2}
	{
	\foreach \r in {1,2}
	{
		\begin{subfigure}[t]{0.14\textwidth}
		\stepcounter{forbiddenSubgraph}
		\resizebox{\textwidth}{!}{
		\begin{tikzpicture}
			\drawNodes;
			\diPathP;
			\diPathQ{1}{2};
			\ifthenelse{\equal{\r}{1}}{\diPathR{1}{1};}{\diPathR{1}{2};}
			\ifthenelse{\equal{\s}{1}}{\diPathS{1}{2};}{\diPathS{2}{1};}
		\end{tikzpicture}
		} 
		\subcaption{\ }
		\label{fig:forbiddenSubgraphDirected\theforbiddenSubgraph}
		\end{subfigure}
	}
	}
	\foreach \r in {1,2}
	{
		\begin{subfigure}[t]{0.14\textwidth}
		\stepcounter{forbiddenSubgraph}
		\resizebox{\textwidth}{!}{
		\begin{tikzpicture}
			\drawNodes;
			\diPathP;
			\diPathQ{1}{2};
			\ifthenelse{\equal{\r}{1}}{\diPathR{2}{1};}{\diPathR{2}{2};}
			\diPathS{1}{1};
		\end{tikzpicture}
		} 
		\subcaption{\ }
		\label{fig:forbiddenSubgraphDirected\theforbiddenSubgraph}
		\end{subfigure}
	}
	
\caption{Forbidden subgraphs in directed graphs.}
\label{fig:fewerForbiddenSubgraphsDirected}
\end{center}
\end{figure}

Table~\ref{TableDirectedSubgraphs} lists all directed versions of interest for each undirected graph in Figure~\ref{fig:forbiddenSubgraphs}.

\DeclareDocumentCommand\sg{m}{[#1]\,$\leq$\xspace}
\DeclareDocumentCommand\sgiso{m}{[#1]\,$\lessapprox$\xspace}
\DeclareDocumentCommand\iso{m}{[#1]\,$\cong$\xspace}
\DeclareDocumentCommand\id{m}{[#1]\,$=$\xspace}

\begin{table}[!ht]
   \begin{center}
      \begin{tabular}{r|l}
         \textbf{Graph} & \textbf{Directed versions inducing a claw} \\ \hline
         (1) & \id{1}, \id{3} \\
         (2) & \sg{1}, \sg{3}, \id{2}, \id{4} \\
         (3) & \id{9}, \id{10} \\
         (4) & \id{13}, \id{15} \\
         (5) & \sg{13}, \id{14}, \sg{15}, \id{16} \\
         (6) & \id{21}, \id{22}\\
         (7) & \iso{29}, \iso{30} \\
         (8) & \sgiso{6}, \sgiso{8}, \iso{11}, \iso{12} \\
         (9) & \iso{9}, \iso{10} \\
         (10) & \sg{1}, \sg{3}, \id{5}, \id{7} \\
         (11) & \sg{2}, \id{6}, \id{8}, \sg{9}, \sg{10}, \sg{11} \\
         (12) & \sg{6}, \sg{8}, \id{11}, \id{12} \\
         (13) & \sg{13}, \sg{15}, \id{17}, \id{19} \\
         (14) & \sg{14}, \sg{16}, \id{18}, \id{20}, \sg{23}, \sg{24} \\
         (15) & \sgiso{13}, \iso{14}, \sgiso{15}, \iso{16} \\
         (16) & \sgiso{25}, \sgiso{26}, \iso{27}, \iso{28} \\
         (17) & \sgiso{2}, \iso{6}, \iso{8}, \sgiso{9}, \sgiso{10}, \sgiso{11} \\
         (18) & \sgiso{1}, \sgiso{3}, \iso{2}, \iso{4} \\
         (19) & \iso{25}, \iso{26} \\
         (20) & \sg{25}, \sg{26}, \id{27}, \id{28} \\
         (21) & \id{29}, \id{30} \\
         (22) & \id{33}, \id{34} \\
         (23) & \sgiso{13}, \sgiso{15}, \iso{17}, \iso{19} \\
         (24) & \iso{13}, \iso{15} \\
         (25) & \iso{25}, \iso{26} \\
         (26) & \sgiso{25}, \sgiso{26}, \iso{27}, \iso{28} \\
         (27) & \iso{1}, \iso{3}
      \end{tabular}
   \end{center}
   \caption{All directed versions of the undirected graphs in Figure~\ref{fig:forbiddenSubgraphs} that induce the same claw in the conflict graph.
   The symbol after each digraph denotes whether the digraph's underlying undirected graph is equal to ($=$), a subgraph of ($\leq$), isomorphic to ($\cong$) or isomorphic to a subgraph of ($\lessapprox$) the undirected graph in the first column.}
   \label{TableDirectedSubgraphs}
\end{table}

\DeclareDocumentCommand\clawfreeDirectedForbidden{}{\ensuremath{\mathfrak{F}_{\text{dir}}}\xspace}

We therefore define $\clawfreeDirectedForbidden$ as the set containing the graphs represented in Figure~\ref{fig:fewerForbiddenSubgraphsDirected} as well as those arising by reverting all arc directions and directly obtain the following directed counterpart to Theorem~\ref{thm:charForbiddenSubgraphsUndirected}.
\begin{theorem}
   \label{thm:charForbiddenSubgraphsDirected}
   Let $D$ be a digraph and let $\conflictGraph$ be its conflict graph.
   Then $\conflictGraph$ contains a claw if and only if $D$ contains one of the graphs in $\clawfreeDirectedForbidden$ as a subgraph.
\end{theorem}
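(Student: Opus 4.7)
The plan is to derive this directed characterization from the already-established undirected characterization (Theorem~\ref{thm:charForbiddenSubgraphsUndirected}) combined with Lemma~\ref{TheoremSubgraphsYieldClaw}, and then to close the loop by the case analysis summarized in Table~\ref{TableDirectedSubgraphs}.

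For the ``only if'' direction I would start with an induced claw in $\conflictGraph$ whose centre is a $2$-path $P \in \twoPaths(D)$ and whose three leaves are $Q, R, S \in \twoPaths(D)$. Lemma~\ref{TheoremSubgraphsYieldClaw} produces a corresponding claw in the conflict graph $\conflictGraph'$ of $G(D)$, and Theorem~\ref{thm:charForbiddenSubgraphsUndirected} then yields a subgraph $G' \subseteq G(D)$ isomorphic to one of the seven graphs in $\clawfreeUndirectedForbidden$. I would then lift the arc directions from $D$ back onto the edges of $G'$, producing a subdigraph $D' \subseteq D$ whose underlying graph is $G'$. Because $Q$, $R$ and $S$ are (directed) $2$-paths in $D$, the pair of arcs that forms each of them must be oriented head-to-tail through its common vertex, and this rules out all but finitely many orientations; running through the seven undirected graphs in $\clawfreeUndirectedForbidden$ yields precisely the list compiled in Table~\ref{TableDirectedSubgraphs}. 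Every entry of that table is a subgraph of some digraph in $\clawfreeDirectedForbidden$ or is obtained from one by reversing all arcs, which is precisely why $\clawfreeDirectedForbidden$ was defined to be closed under global arc reversal.

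For the ``if'' direction I would argue by direct inspection. For every digraph $D'' \in \clawfreeDirectedForbidden$, the four $2$-paths $P$, $Q$, $R$, $S$ read off from the vertex labels $a,b,c,x,y,z$ of Figure~\ref{fig:fewerForbiddenSubgraphsDirected} are, by construction, in exactly the conflict pattern of a claw in $\conflictGraph$: each of $Q$, $R$, $S$ shares two vertices of $\setdef{a,b,c}$ with $P$, while any two leaves share at most one vertex of $\setdef{a,b,c}$ and have pairwise distinct ``extra'' endpoints among $x$, $y$, $z$. Hence $\setdef{P,Q,R,S}$ induces a claw in $\conflictGraph$, which closes the equivalence.

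The main obstacle is the bookkeeping in the middle step: one must verify that Table~\ref{TableDirectedSubgraphs} indeed enumerates every orientation of a graph in $\clawfreeUndirectedForbidden$ that simultaneously turns $Q$, $R$ and $S$ into directed $2$-paths, and that no other orientations need to be considered. This is a finite case analysis over the seven undirected forbidden subgraphs, substantially reduced by closure under global arc reversal and by the symmetries of the roles of $x$, $y$, $z$ and of the triangle $a,b,c$; nevertheless, its conscientious execution constitutes the bulk of the work and is where careful accounting is most likely to slip.
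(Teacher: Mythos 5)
Your proposal follows essentially the same route as the paper: reduce to the undirected case via Lemma~\ref{TheoremSubgraphsYieldClaw}, enumerate the orientations of the path-union graphs of Figure~\ref{fig:forbiddenSubgraphs} in which $Q$, $R$, $S$ remain directed $2$-paths (the content of Table~\ref{TableDirectedSubgraphs} and Figure~\ref{fig:fewerForbiddenSubgraphsDirected}, with closure under arc reversal), and verify the converse by direct inspection. The only cosmetic difference is that the paper runs the orientation case analysis over the $27$ path-union graphs rather than over the seven minimal subgraphs in $\clawfreeUndirectedForbidden$, but this does not change the argument.
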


The corresponding complexity results reads as follows.
\begin{corollary}
  \label{thm:cplxForbiddenSubgraphsDirected}  
  The (weighted) \probAlmostDisjointTwoPathDecomposition problem for digraphs that do not contain a subgraph from $\clawfreeDirectedForbidden$ can be solved in polynomial time.
\end{corollary}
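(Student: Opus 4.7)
The plan is to mirror almost verbatim the argument used in Corollary~\ref{thm:cplxForbiddenSubgraphsUndirected}, with the directed characterization (Theorem~\ref{thm:charForbiddenSubgraphsDirected}) replacing the undirected one. Concretely, I would start from the given digraph $D = (V,A)$ that contains no subgraph from $\clawfreeDirectedForbidden$ and construct its conflict graph $\conflictGraph = (\twoPaths(D), \forbiddenPairs)$. Since the number of $2$-paths is at most $|A|^2$ and conflicts can be checked in constant time per pair by inspecting vertex sets, this construction runs in polynomial time in $|V|+|A|$.

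Next, I would invoke Theorem~\ref{thm:charForbiddenSubgraphsDirected} to conclude that $\conflictGraph$ is claw-free. Recall that, by definition, a \almostDisjointTwoPathDecomposition of $D$ corresponds exactly to a stable set in $\conflictGraph$ whose elements partition $A$, and such a partition exists if and only if there is a stable set of size $|A|/2$ in $\conflictGraph$ (since every $2$-path covers exactly two arcs).

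I would then apply Minty's algorithm~\cite{Minty80} to compute a maximum stable set in $\conflictGraph$ in polynomial time and check whether its size equals $|A|/2$; in the negative case no \almostDisjointTwoPathDecomposition exists. For the weighted variant, given costs $c_P \in \R$ for $P \in \twoPaths(D)$, I would follow the same trick as in the undirected proof: transform $c$ into $c' \in \R^{\twoPaths(D)}$ (for instance by adding a sufficiently large constant to each coordinate so that any stable set of size smaller than $|A|/2$ becomes dominated) such that $c$-minimum \almostDisjointTwoPathDecompositions correspond precisely to $c'$-maximum stable sets of cardinality $|A|/2$ in $\conflictGraph$. Invoking the algorithm of Nakamura and Tamura~\cite{NakamuraT01} for maximum-weight stable sets in claw-free graphs then yields the desired decomposition in polynomial time.

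There is essentially no hard step here: the entire work has been done in Theorem~\ref{thm:charForbiddenSubgraphsDirected}, which guarantees claw-freeness of $\conflictGraph$, and in the cited polynomial-time algorithms for (weighted) maximum stable sets on claw-free graphs. The only mild subtlety worth noting in passing is that $G(D)$ may have parallel edges coming from pairs of antiparallel arcs; however, this causes no problem for the reduction, because $\twoPaths(D)$ and $\conflictGraph$ are defined directly on $D$ and the same polynomial-time construction and algorithms apply without modification.
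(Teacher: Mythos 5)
Your proposal is correct and follows exactly the route the paper intends: the paper explicitly omits this proof as ``almost identical'' to that of Corollary~\ref{thm:cplxForbiddenSubgraphsUndirected}, and your argument is precisely that adaptation --- build the conflict graph, invoke Theorem~\ref{thm:charForbiddenSubgraphsDirected} for claw-freeness, and apply Minty's algorithm (resp.\ Nakamura--Tamura for the weighted case) to find a stable set of size $|A|/2$. Your closing remark about antiparallel arcs is a sensible observation but not needed, since, as you note, the reduction is defined directly on $D$.
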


The proof is omitted since it is almost identical to that of Corollary~\ref{thm:cplxForbiddenSubgraphsUndirected}.
A simple observation is that all (di)graphs in $\clawfreeUndirectedForbidden{}$ and $\clawfreeDirectedForbidden$ contain a cycle with an adjacent edge, which yields the following consequence.
\begin{corollary}
  The \probAlmostDisjointTwoPathDecomposition problem on paths, trees and cycles can be solved in polynomial time.
\end{corollary}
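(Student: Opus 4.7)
The plan is to apply Corollaries~\ref{thm:cplxForbiddenSubgraphsUndirected} and~\ref{thm:cplxForbiddenSubgraphsDirected} directly, via the observation immediately preceding the corollary: every (di)graph in $\clawfreeUndirectedForbidden \cup \clawfreeDirectedForbidden$ contains a cycle together with an edge incident to that cycle but not belonging to it. Once this observation is established, the corollary follows by showing that none of paths, trees, and cycles can contain such a configuration as a subgraph.

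First, I would verify the structural observation itself. Inspecting Figure~\ref{fig:forbiddenSubgraphs}, the three paths $Q$, $R$, $S$ share pairs of vertices from $\{a,b,c\}$ with $P$, so taken together with the edges of $P$ they contribute edges inside $\{a,b,c\}$ that complete the triangle $abc$ (the edge $\{a,c\}$ always appears whenever $R$ has its middle vertex in $\{a,c\}$, and analogous completions occur in the remaining cases). The additional vertices $x$, $y$, $z$ are then joined to vertices of this triangle by at least one further edge, which is an edge adjacent to the cycle but not contained in it. The directed versions listed in Figure~\ref{fig:fewerForbiddenSubgraphsDirected} inherit the same underlying undirected structure, so the same property holds for $\clawfreeDirectedForbidden$.

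Next, I would dispatch the three graph classes. Paths and trees are acyclic, so they contain no cycle at all, and in particular cannot contain any subgraph of $\clawfreeUndirectedForbidden$ or $\clawfreeDirectedForbidden$. A simple cycle $C_n$ does contain a cycle, but every vertex of $C_n$ has degree exactly $2$, so the only edges incident to a vertex of the cycle are the two cycle edges themselves; hence no forbidden subgraph fits inside $C_n$ either. Applying Corollary~\ref{thm:cplxForbiddenSubgraphsUndirected} in the undirected case and Corollary~\ref{thm:cplxForbiddenSubgraphsDirected} in the directed case then yields polynomial-time solvability.

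The only genuinely laborious step is the verification of the structural observation, and even that amounts to a finite case check on the explicit figures of Section~\ref{SectionClawFree}; none of the remaining reasoning requires more than noting the maximum degree of a cycle. I therefore expect no substantive obstacle, and I would keep the written proof short, essentially citing the two corollaries after one sentence on acyclicity and one sentence on the degree bound in $C_n$.
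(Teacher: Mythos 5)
Your argument is exactly the one the paper uses: it states the observation that every (di)graph in $\clawfreeUndirectedForbidden$ and $\clawfreeDirectedForbidden$ contains a cycle with an adjacent edge and then derives the corollary from Corollaries~\ref{thm:cplxForbiddenSubgraphsUndirected} and~\ref{thm:cplxForbiddenSubgraphsDirected}, just as you do via acyclicity of paths and trees and the degree bound in $C_n$. (One small caution: the triangle on $\setdef{a,b,c}$ is not always completed, e.g.\ when $R=(a,y,c)$; the cycle may instead pass through $x$, $y$ or $z$, but the finite check on the figures still confirms the observation.)
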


\clearpage

\section{Series-parallel digraphs}
\label{SectionSeriesParallel}

We now consider the graph class of series-parallel digraphs.
We propose a polynomial dynamic program that solves the \probAlmostDisjointTwoPathDecomposition problem on such digraphs.
First, we give a formal definition of series-parallel (SP) digraphs and then describe the dynamic program in detail.  

\begin{definition}[Series-parallel digraph]
   A digraph $D$ is called \emph{series-parallel} (\emph{SP-digraph}) if it has a single arc or is composed of two smaller series-parallel digraphs $D_1$ and $D_2$, both having unique sources $s_i$ and unique sinks $t_i$ for $i=1,2$, in one of the following ways:
   \begin{enumerate}
   \item
      \emph{Series composition:}
      $D$ is obtained from $D_1$ and $D_2$ by considering their disjoint union and identifying
      $t_2$ with $s_1$.
   \item
      \emph{Parallel composition:}
      $D$ is obtained from $D_1$ and $D_2$ by considering their disjoint union and identifying
      $s_1$ with $s_2$ and $t_1$ with $t_2$.
   \end{enumerate}
\end{definition}

\DeclareDocumentCommand\N{}{\mathbb{N}}

\DeclareDocumentCommand\outArcs{om}{\IfValueTF{#1}{\delta_{#1}^{\text{out}}(#2)}{\delta^{\text{out}}(#2)}}
\DeclareDocumentCommand\inArcs{om}{\IfValueTF{#1}{\delta_{#1}^{\text{in}}(#2)}{\delta^{\text{in}}(#2)}}
\DeclareDocumentCommand\subgraph{o}{\IfValueTF{#1}{D_{#1} \setminus (S_{#1} \cup T_{#1})}{D \setminus (S \cup T)}}

\begin{definition}[feasible configurations for an SP-digraph]
   \label{DefinitionFeasibleConfiguration}
   Let $D$ be a simple SP-digraph with source $s$ and sink $t$.
   A triple $(a,b,c) \in \N \times \N \times \setdef{0,1,2}$ is called a \emph{$D$-feasible configuration} 
   if there exist subsets $S \subseteq \outArcs[D]{s}$, $T \subseteq \inArcs[D]{t}$
   and a \almostDisjointTwoPathDecomposition $\X$ of $\subgraph$ with the following properties:
   \begin{enumerate}[label=(\roman*)]
   \item
      \label{FeasibleConfigurationCardinalities}
      \label{FeasibleConfigurationFirst}
      $a = |S|$ and $b = |T|$.
   \item
      \label{FeasibleConfigurationTriangles}
      For no arc $(u,v) \in S \cup T$ there is a $u$-$v$-path in $\X$.
   \item
      \label{FeasibleConfiguration1Path}
      $D$ contains the arc $(s,t)$ if and only if $c = 1$. In this case, $(s,t) \in S \cap T$.
   \item
      \label{FeasibleConfiguration2Path}
      \label{FeasibleConfigurationLast}
      $\X$ contains an $s$-$t$-path (of length $2$) if and only if $c = 2$.
   \end{enumerate}
\end{definition}

In Definition~\ref{DefinitionFeasibleConfiguration}, the digraph $D = (V,A)$ may have only a partial decomposition into $2$-paths.
We say that an arc $a \in A$ is \emph{covered} if $a \in P$ for some path $P \in \X$ and that $a$ is \emph{free} otherwise.
Note that $c$ in Definition~\ref{DefinitionFeasibleConfiguration} is well-defined by Property~\ref{FeasibleConfigurationTriangles}.
Furthermore, $D$ has a \almostDisjointTwoPathDecomposition if and only if $(0,0,c)$ is $D$-feasible for some $c \in \setdef{0,2}$.

We will now characterize how feasible configurations of a series composition are related to feasible configurations of the two component graphs $D_i$ with sources $s_i$ and sinks $t_i$ (for $i=1,2$),
identifying $t_1$ with $s_2$.
By distinguishing the cases, for each $i=1,2$, whether in the decomposition of the composed graph, an $s_i$-$t_i$-arc is free (F), is covered by a path that uses edges from both graphs (C), or does not exist (E), we obtain several relations, which will be justified in a subsequent lemma:
\begin{itemize}[label=\arabic]
\item[\mylabel{FeasibleConfigurationSeriesEmptyEmpty}{(EE)}] 
   $(a,b,c) = (a_1, b_2, 0)$, $b_1 = a_2$ and $c_1,c_2\neq 1$ 
\item[\mylabel{FeasibleConfigurationSeriesFreeEmpty}{(FE)}] 
   $(a,b,c) = (a_1, b_2, 0)$, $b_1 = a_2 + 1$, $c_1 = 1$ and $c_2 \neq 1$ 
\item[\mylabel{FeasibleConfigurationSeriesEmptyFree}{(EF)}] 
   $(a,b,c) = (a_1, b_2, 0)$, $b_1 = a_2-1$, $c_1 \neq 1$ and $c_2= 1$
\item[\mylabel{FeasibleConfigurationSeriesFreeFree}{(FF)}] 
   $(a,b,c) = (a_1, b_2, 0)$, $b_1 = a_2$ and $c_1 = c_2 = 1$
\item[\mylabel{FeasibleConfigurationSeriesCoveredEmpty}{(CE)}] 
   $(a,b,c) = (a_1 - 1, b_2, 0)$, $b_1 = a_2$, $c_1 = 1$ and $c_2 \neq 1$
\item[\mylabel{FeasibleConfigurationSeriesCoveredFree}{(CF)}] 
   $(a,b,c) = (a_1 - 1, b_2, 0)$, $b_1 = a_2 - 1$ and $c_1 = c_2 = 1$
\item[\mylabel{FeasibleConfigurationSeriesEmptyCovered}{(EC)}] 
   $(a,b,c) = (a_1, b_2 - 1, 0)$, $b_1 = a_2$, $c_1 \neq 1$ and $c_2 = 1$
\item[\mylabel{FeasibleConfigurationSeriesFreeCovered}{(FC)}] 
   $(a,b,c) = (a_1, b_2 - 1, 0)$, $b_1=a_2+1$ and $c_1 = c_2= 1$
\item[\mylabel{FeasibleConfigurationSeriesCoveredCoveredUncombined}{(CC)}] 
   $(a,b,c) = (a_1-1, b_2-1, 0)$, $b_1 = a_2 > 1$ and $c_1 = c_2 = 1$
\item[\mylabel{FeasibleConfigurationSeriesCoveredCoveredCombined}{(CC')}] 
   $(a,b,c) = (a_1-1, b_2-1, 2)$, $b_1=a_2$ and $c_1 = c_2 = 1$ 
\end{itemize}
   
\begin{lemma}[feasible configurations for series composition]
   \label{TheoremFeasibleConfigurationsSeries}
   Let $D$ be a simple SP-digraph with source $s$ and sink $t$ such that $D$ is obtained from SP-digraphs $D_i = (V_i, A_i)$ with sources $s_i$ and sinks $t_i$ for $i=1,2$ by a series operation, identifying $t_2$ with $s_1$. 
   Then a configuration $(a,b,c)$ is $D$-feasible if and only if there exist $D_i$-feasible configurations $(a_i,b_i,c_i)$ for $i=1,2$ that satisfy one of the constraints~\ref{FeasibleConfigurationSeriesEmptyEmpty}--\ref{FeasibleConfigurationSeriesCoveredCoveredCombined}.
\end{lemma}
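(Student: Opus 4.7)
The plan is to exploit the following structural observation about a series composition with shared internal vertex $v$ (so $\{v\} = V_1 \cap V_2$): every $2$-path in $D$ either lies entirely in $D_1$ (type (a)), entirely in $D_2$ (type (b)), or is centered at $v$ with one arc in each $A_i$ (type (c), ``crossing''). Indeed, a $2$-path whose middle vertex is not $v$ must use two arcs incident to the same non-$v$ vertex, which is impossible across the cut. Each crossing path pairs an incoming arc of $v$ belonging to $A_1$ with an outgoing arc of $v$ belonging to $A_2$.

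For the forward direction, given a $D$-feasible witness $(\X, S, T)$, I would set $\X_i := \{P \in \X \mid P \subseteq A_i\}$ and take $S_i \subseteq \outArcs[D_i]{s_i}$, $T_i \subseteq \inArcs[D_i]{t_i}$ to be those arcs not covered by $\X_i$. Then $\X_i$ is an \almostDisjointTwoPathDecomposition of $D_i \setminus (S_i \cup T_i)$: all arcs of $A_i$ not incident to $s_i$ or $t_i$ are forced into type-(a) or type-(b) paths, and Property~\ref{FeasibleConfigurationTriangles} descends to $\X_i$ because any violating $u$-$v$-path would itself be a type-(a) or type-(b) path. The arc $(s_i, t_i)$, when it exists, is either free in $\X$ or is the $A_i$-arc of the crossing path covering it. Enumerating the four E/F/C combinations for the two special arcs yields exactly the ten listed cases, with the arithmetic following from counting crossing paths: each crossing path contributes one uncovered incoming-to-$v$ arc to $T_1$ and one uncovered outgoing-from-$v$ arc to $S_2$, with $\pm1$ adjustments whenever $(s_i,t_i)$ is itself such an arc.

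For the backward direction, I would take $\X := \X_1 \cup \X_2$ and adjoin crossing paths through $v$ obtained by pairing up the elements of $T_1$ with the elements of $S_2$. The count relations guarantee a bijective pairing exists after handling the special arcs as prescribed: in~\ref{FeasibleConfigurationSeriesEmptyEmpty} the equation $b_1 = a_2$ suffices; F-cases leave the corresponding $(s_i, t_i)$-arc uncovered and place it into $S$ or $T$ of $D$; C-cases use it as one half of a crossing path; and~\ref{FeasibleConfigurationSeriesCoveredCoveredCombined} combines $(s_1, t_1)$ and $(s_2, t_2)$ into the single $2$-path $(s_1, v, t_2)$, which is simultaneously a crossing path and an $s$-$t$-path, producing $c = 2$. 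The key claim is that the augmented family is almost-disjoint: a crossing path $(x, v, y)$ can conflict with some $P \in \X_1$ only if $\{x, v\} \subseteq V(P)$, which forces $P$ to be an $x$-$v$-path in $\X_1$ while $(x, v) \in T_1$---contradicting Property~\ref{FeasibleConfigurationTriangles} applied to $\X_1$. The symmetric argument rules out conflicts with $\X_2$-paths, and two distinct crossing paths share only $v$.

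The main obstacle will be the systematic case analysis: one must verify that each of the ten conditions corresponds to precisely one valid E/F/C combination and that no $D$-feasible configuration escapes this enumeration. The symmetry between the two components keeps the bookkeeping manageable, but sub-case~\ref{FeasibleConfigurationSeriesCoveredCoveredCombined} requires particular care because it is the unique situation producing $c = 2$: it forces $(s_1, t_1)$ and $(s_2, t_2)$ to be paired with each other rather than with arbitrary partners from $T_1$ and $S_2$, and one must check that this ``double crossing'' still admits a valid completion of the remaining $T_1 \setminus \{(s_1,t_1)\}$ and $S_2 \setminus \{(s_2,t_2)\}$ via the count equation $b_1 = a_2$.
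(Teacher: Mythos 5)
Your proposal is correct and follows essentially the same route as the paper's proof: restricting $\X$ to each component, classifying the arc $(s_i,t_i)$ as empty/free/covered to obtain the ten cases, counting crossing paths through the identified vertex to derive the relation between $b_1$ and $a_2$, and in the converse direction pairing $T^*\subseteq T_1$ with $S^*\subseteq S_2$ arbitrarily and ruling out conflicts via Property~\ref{FeasibleConfigurationTriangles}. The only remaining work in both your sketch and the paper is the routine verification of the arithmetic in each of the ten cases, which you correctly identify as bookkeeping.
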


\begin{proof}
   To see necessity, let $(a,b,c)$ be a $D$-feasible configuration.
   By definition, there exist sets $S \subseteq \outArcs[D]{s}$, $T \subseteq \inArcs[D]{t}$ and a \almostDisjointTwoPathDecomposition $\X$ of $\subgraph$ that satify Properties~\ref{FeasibleConfigurationFirst}--\ref{FeasibleConfigurationLast} from Definition~\ref{DefinitionFeasibleConfiguration}.
   By construction, for $i=1,2$, $\X_i := \setdef{ P \in \X }[ P \subseteq A_i ]$ is a \almostDisjointTwoPathDecomposition of $\subgraph[i]$ with $S_i = \setdef{ a \in \outArcs[D_i]{s_i} }[ a \notin A(\X_i) ]$ and $T_i = \setdef{ a \in \inArcs[D_i]{t_i} }[ a \notin A(\X_i)]$.
   Let $Y \subseteq A$ be the set of arcs belonging to a path of $\X$ that contains exactly one arc in $A_1$ (and the other one in $A_2$).
   Note that $Y \subseteq A \setminus (S \cup T)$.
   For $i = 1,2$, let $U_i := \setdef{ (s_i,t_i) \in A_i }$ be the set containing the arc $(s_i,t_i)$ if such an arc exists.
   Note that $U_i \subseteq S \cup T \cup Y$ since the arc cannot be part of a path in $D_i$.
   Let $S_1 := S \cup (U_1 \cap Y)$ and $T_1 := (\inArcs[D_1]{t_1} \cap Y) \cup U_1$, and similarly $S_2 := (\outArcs[D_2]{s_2} \cap Y) \cup U_2$ and $T_2 := T \cup (U_2 \cap Y)$, noting that $t_1 = s_2$.
   By construction, for $i=1,2$, $\X_i := \setdef{ P \in \X }[ P \subseteq A_i ]$
   is a \almostDisjointTwoPathDecomposition of $\subgraph[i]$.
   For $i=1,2$, define $a_i := |S_i|$ and $b_i := |T_i|$, and let $c_i$ be such that~\ref{FeasibleConfiguration1Path} and~\ref{FeasibleConfiguration2Path} from Definition~\ref{DefinitionFeasibleConfiguration} are satisfied with respect to $D_i$ and $\X_i$.

   Since $S_i \cup T_i \subseteq S \cup T \cup Y$, Property~\ref{FeasibleConfigurationTriangles} in Definition~\ref{DefinitionFeasibleConfiguration} is satisfied for $S_i$, $T_i$ and $\X_i$ for $i=1,2$.
   Moreover, Properties~\ref{FeasibleConfigurationFirst}, \ref{FeasibleConfiguration1Path} and~\ref{FeasibleConfiguration2Path} are satisfied by construction.
   Thus, $(a_i,b_i,c_i)$ is $D_i$-feasible for $i=1,2$, and it remains to check that both configurations satisfy one of the Properties~\ref{FeasibleConfigurationSeriesEmptyEmpty}--\ref{FeasibleConfigurationSeriesCoveredCoveredCombined}.

   Since $D$ was obtained by a series operation, the distance from $s$ to $t$ is at least $2$, and hence $c \neq 1$.
   Note that $a \in \setdef{a_1, a_1-1}$ and $b \in \setdef{b_2, b_2-1}$.
   Moreover, $a = a_1$ if and only if $U_1 \cap Y = \emptyset$, and $b = b_2$ if and only if $U_2 \cap Y = \emptyset$.
   Furthermore, we have
   \begin{multline}
      b_1 - a_2
      = |T_1| - |S_2|
      = |(\inArcs[D_1]{s_1} \cap Y) \cup U_1| - |(\outArcs[D_2]{t_2} \cap Y) \cup U_2| \\
      = \frac{1}{2}|Y| + |U_1 \setminus Y| - \frac{1}{2}|Y| - |U_2 \setminus Y| 
      = |U_1 \setminus Y| - |U_2 \setminus Y|. \label{FeasibleConfigurationSeriesCenterCount}
   \end{multline}
   If $c = 2$, then $U_1$ and $U_2$ must contain the two arcs of the $s_1$-$t_2$-path, i.e., we have $U_1 \cup U_2 \subseteq Y$.
   Thus, $a = a_1 - 1$, $b = b_2 - 1$ and $b_1 = a_2$.
   Furthermore, $S \cap U_1 = \emptyset$ and $T \cap U_2 = \emptyset$ hold.
   Hence, \ref{FeasibleConfigurationSeriesCoveredCoveredCombined} is satisfied.

   Otherwise, i.e., if $c = 0$, the following three cases may occur for each $i = 1,2$:
   $U_i$ is \emph{empty} ($U_i = \emptyset$), $U_i$ is \emph{free} ($\emptyset \neq U_i \subseteq S \cup T$), or $U_i$ is \emph{covered} ($\emptyset \neq U_i \subseteq Y$).
   This results in a total of 9 combinations that correspond to Properties~\ref{FeasibleConfigurationSeriesEmptyEmpty}--\ref{FeasibleConfigurationSeriesCoveredCoveredUncombined}.
   The mapping is given by the property labels, e.g., (EF) means that $U_1$ is empty and $U_2$ is free, and (FC) means that $U_1$ is free and $U_2$ is covered.
   We briefly discuss the involved configurations.
   The sets $U_i$ determine the values of $a - a_1$ and $b - b_2$ and, by~\eqref{FeasibleConfigurationSeriesCenterCount}, the value of $b_1 - a_2$.
   Finally, $c = 0$ if $U_i$ is empty, and $c = 1$ otherwise.
   
   To see sufficiency, consider a configuration $(a,b,c)$ and $D_i$-feasible configurations $(a_i,b_i,c_i)$ for $i = 1,2$ such that one of the Properties~\ref{FeasibleConfigurationSeriesEmptyEmpty}--\ref{FeasibleConfigurationSeriesCoveredCoveredCombined} is satisfied.
   Hence, there exist sets $S_i \subseteq \outArcs[D_i]{s_i}$, $T_i \subseteq \inArcs[D_i]{t_i}$ and \almostDisjointTwoPathDecompositions $\X_i$ of $\subgraph[i]$ that satify Properties~\ref{FeasibleConfigurationFirst}--\ref{FeasibleConfigurationLast} of Definition~\ref{DefinitionFeasibleConfiguration} with respect to SP-digraph $D_i$.
   We show that $(a,b,c)$ is $D$-feasible.

   For $i = 1,2$, let again $U_i := \setdef{ (s_i,t_i) \in A_i }$ be the set containing the arc $(s_i,t_i)$ if such an arc exists in $A_i$ (and $U_i = \emptyset$ otherwise).
   Note that $U_i \subseteq S \cup T \cup Y$ since the arc cannot be part of a path in $D_i$.
   We will now define two sets of arcs $T^* \subseteq T_1$ and $S^* \subseteq S_2$ that have the same cardinality.
   This allows us to augment $\X_1 \cup \X_2$ by adding $|T^*| = |S^*|$ many paths consisting of exactly one arc from $T^*$ and exactly one arc from $S^*$.
   We distinguish several cases, depending on which property is satisfied by $(a_i,b_i,c_i)$:
   \begin{itemize}
   \item
      Case 1:
      Properties~\ref{FeasibleConfigurationSeriesEmptyEmpty}, \ref{FeasibleConfigurationSeriesFreeEmpty},
      \ref{FeasibleConfigurationSeriesEmptyFree},
      or \ref{FeasibleConfigurationSeriesFreeFree} are satisfied: \\
      Set $T^* := T_1 \setminus U_1$, $S^* := S_2 \setminus U_2$, $S := S_1$ and $T := T_2 $.
   \item
      Case 2: Properties~\ref{FeasibleConfigurationSeriesCoveredEmpty} or~\ref{FeasibleConfigurationSeriesCoveredFree} are satisfied. \\
      Set $T^* := T_1$, $S^* := S_2 \setminus U_2$, $S := S_1 \setminus U_1$ and $T := T_2 $.
   \item
      Case 3: Properties~\ref{FeasibleConfigurationSeriesEmptyCovered} or~\ref{FeasibleConfigurationSeriesFreeCovered} are satisfied. \\
      Set $T^* := T_1 \setminus U_1$, $S^* := S_2$, $S := S_1 $ and $T := T_2\setminus U_2$. 
   \item
      Case 4: Properties~\ref{FeasibleConfigurationSeriesCoveredCoveredUncombined} or~\ref{FeasibleConfigurationSeriesCoveredCoveredCombined} are satisfied. \\
      Set $T^* := T_1$, $S^*$, $S := S_1 \setminus U_1$ and $T := T_2 \setminus U_2$.
   \end{itemize}

   Let $c := 2$ if Property~\ref{FeasibleConfigurationSeriesCoveredCoveredCombined} is satisfied, and $c := 0$ otherwise.
   It is easy to check that in all cases we have $|T^*| = |S^*|$.
   We now consider an arbitrary set $\Y$ of $|T^*|$-many disjoint $2$-paths in $T^* \cup S^*$ having one arc in $T^*$ and the other arc in $S^*$.
   We claim that the union $\X := \X_1 \cup \X_2 \cup \Y$ is a \almostDisjointTwoPathDecomposition of $\subgraph$.
   Since $D$ is simple, no pair of paths in $\Y$ shares more than one vertex, and hence $\Y$ is itself a \almostDisjointTwoPathDecomposition of the corresponding subgraph.
   Moreover, any path in $\X_1$ shares the vertex $t_1 = s_2$ or no vertex at all with any path in $\X_2$.
   Suppose a path $u$-$v$-$w$ in $\X_1$ shares two vertices with a path $p$-$t_1$-$q$ in $\Y$.
   Since $q \notin V_1$, we have $u = p$ and $w = t_1$, which violates Property~\ref{FeasibleConfigurationTriangles} for $\X_1$.
   Similarly, suppose a path $u$-$v$-$w$ in $\X_2$ shares two vertices with a path $p$-$s_2$-$q$ in $\Y$.
   Since $p \notin V_2$, we have $w = q$ and $u = s_2$, which violates Property~\ref{FeasibleConfigurationTriangles} for $\X_2$.
   We obtain that $(a,b,c)$ is a $D$-feasible configuration of $\subgraph$, which concludes the proof.
\end{proof}

We now turn to the parallel composition.

\begin{lemma}[feasible configurations for parallel composition]
   \label{TheoremFeasibleConfigurationsParallel}
   Let $D = (V,A)$ be a simple SP-digraph with source $s$ and sink $t$ such that $D$ is obtained 
   from SP-digraphs $D_i = (V_i,A_i)$ with sources $s_i$ and sinks $t_i$ for $i=1,2$ by a parallel operation. 
   Then a configuration $(a,b,c)$ is $D$-feasible if and only if there exist $D_i$-feasible configurations $(a_i,b_i,c_i)$ for $i=1,2$ that satisfy 
   \begin{enumerate}[label={(\roman*)}]
   \item
      \label{FeasibleConfigurationParallelTriple} $(a,b,c) = (a_1 + a_2, b_1 + b_2, \max \setdef{c_1,c_2})$ and
   \item
      \label{FeasibleConfigurationParallelC} $c_1 = 0$ or $c_2 = 0$.
   \end{enumerate}
\end{lemma}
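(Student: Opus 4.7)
The plan is to exploit two structural facts about the parallel composition: $V_1 \cap V_2 = \{s,t\}$, and in any SP-digraph the unique source has no incoming arc while the unique sink has no outgoing arc. Combined, these imply that the middle vertex of any $2$-path can be neither $s$ nor $t$, and hence every path $P \in \X$ is contained entirely in either $A_1$ or $A_2$. Once this observation is in hand, both directions reduce largely to bookkeeping.

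For necessity, given a $D$-feasible configuration realized by $S$, $T$ and $\X$, I would put $\X_i := \{P \in \X : P \subseteq A_i\}$, $S_i := \outArcs[D_i]{s_i} \setminus A(\X_i)$ and $T_i := \inArcs[D_i]{t_i} \setminus A(\X_i)$. Since $\outArcs{s}$ is the disjoint union of $\outArcs[D_1]{s_1}$ and $\outArcs[D_2]{s_2}$ (and analogously at $t$), the cardinalities immediately yield $a = a_1 + a_2$ and $b = b_1 + b_2$. Property~\ref{FeasibleConfigurationTriangles} descends to each $D_i$ because any forbidden $u$-$v$-path would live entirely inside $\X_i$. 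The $c$-value splits by cases: if $D$ contains $(s,t)$, then by simplicity this arc lies in exactly one $A_i$, giving $c_i = 1$; if $\X$ contains an $s$-$t$-path of length $2$, that path lies in exactly one $\X_i$, giving $c_i = 2$. In either case the partner value is forced to be $0$ (else conflicts arise), so~\ref{FeasibleConfigurationParallelC} holds and $c = \max\{c_1,c_2\}$.

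For sufficiency, given $D_i$-feasible configurations $(a_i,b_i,c_i)$ satisfying~\ref{FeasibleConfigurationParallelTriple}--\ref{FeasibleConfigurationParallelC}, I would take $\X := \X_1 \cup \X_2$, $S := S_1 \cup S_2$, $T := T_1 \cup T_2$ and verify Definition~\ref{DefinitionFeasibleConfiguration}. Cardinalities are immediate, and the identifications governing $c$ follow directly from the analysis above applied in reverse. The real work—and the main obstacle—is conflict-freeness of $\X$ together with Property~\ref{FeasibleConfigurationTriangles}. Paths inside the same $\X_i$ cause no trouble, while a pair $P \in \X_1$, $Q \in \X_2$ can only share vertices in $\{s,t\}$; by the middle-vertex observation, they share both $s$ and $t$ precisely when each is of the form $s$-$v$-$t$, which would force $c_1 = c_2 = 2$ and is ruled out by~\ref{FeasibleConfigurationParallelC}. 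A potential violation of~\ref{FeasibleConfigurationTriangles} by a cross-gadget path is analogous: it would require $(s,t) \in S \cup T$ together with an $s$-$t$-path in $\X$ lying in the other component, which again is exactly the situation excluded by~\ref{FeasibleConfigurationParallelC}. This isolates condition~\ref{FeasibleConfigurationParallelC} as precisely the obstruction that has to be assumed.
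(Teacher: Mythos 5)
Your proof is correct and takes essentially the same route as the paper's: split $\X$ into $\X_1$ and $\X_2$ along the two components, verify the cardinalities and $c$-values componentwise, and observe that condition~\ref{FeasibleConfigurationParallelC} is precisely what excludes a conflict or a forbidden $u$-$v$-path arising across the two components at $\setdef{s,t}$. Your explicit middle-vertex observation (no $2$-path can have $s$ or $t$ as its interior vertex, so every path lies entirely in one $A_i$) is a point the paper leaves implicit, but otherwise the arguments coincide.
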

  
\begin{proof}
   To see necessity, let $(a,b,c)$ be a $D$-feasible configuration.
   By definition, there exist sets $S \subseteq \outArcs[D]{s}$, $T \subseteq \inArcs[D]{t}$ and a \almostDisjointTwoPathDecomposition $\X$ of $\subgraph$ that satisfy Properties~\ref{FeasibleConfigurationFirst}--\ref{FeasibleConfigurationLast} of Definition~\ref{DefinitionFeasibleConfiguration}.
   By construction, for $i=1,2$, $\X_i := \setdef{ P \in \X }[ P \subseteq A_i ]$ is a \almostDisjointTwoPathDecomposition of $\subgraph[i]$ with $S_i = S \cap A_i$ and $T_i = T \cap A_i$.
   For $i=1,2$, define $a_i := |S_i|$, $b_i := |T_i|$ and $c_i \in \setdef{0,1,2}$ accordingly.
   It is now easy to verify~\ref{FeasibleConfigurationParallelTriple} componentwise.
   Since $D$ is simple, at most one subgraph may contain arc $(s,t)$.
   Moreover, at most one of the decompositions $\X_i$ may contain an $s$-$t$-path, and in this case, the other subgraph may not contain an $(s,t)$-arc.
   Together, this shows~\ref{FeasibleConfigurationParallelC}.

   To see sufficiency, consider a configuration $(a,b,c)$ and $D_i$-feasible configurations $(a_i,b_i,c_i)$ for $i = 1,2$ such that~\ref{FeasibleConfigurationParallelTriple} and~\ref{FeasibleConfigurationParallelC} are satisfied.
   Hence, there exist sets $S_i \subseteq \outArcs[D_i]{s_i}$, $T_i \subseteq \inArcs[D_i]{t_i}$ and \almostDisjointTwoPathDecompositions $\X_i$ of $\subgraph[i]$ that satify Properties~\ref{FeasibleConfigurationFirst}--\ref{FeasibleConfigurationLast} of Definition~\ref{DefinitionFeasibleConfiguration} with respect to SP-digraph $D_i$.
   We show that $(a,b,c)$ is $D$-feasible.

   To this end, let $\X := \X_1 \cup \X_2$, $S := S_1 \cup S_2$ and $T := T_1 \cup T_2$.
   Properties~\ref{FeasibleConfigurationCardinalities}, \ref{FeasibleConfiguration1Path} and~\ref{FeasibleConfiguration2Path} of Definition~\ref{DefinitionFeasibleConfiguration} are easily verified.
   Suppose Property~\ref{FeasibleConfigurationTriangles} is violated for an arc $(u,v) \in S \cup T$.
   This means that for some $i \in \setdef{1,2}$ there exists a $u$-$v$-path in $\X_i$.
   Since $(a_i,b_i,c_i)$ is $D_i$-feasible, $(u,v)$ must be part of the other digraph $D_j$ with $j \neq i$, which implies $u = s$ and $v = t$.
   As a consequence, $c_i = 1$ and $c_j = 2$, which contradicts~\ref{FeasibleConfigurationParallelC}.
   This concludes the proof.
\end{proof}

\DeclareDocumentCommand\D{}{\mathcal{D}}
\DeclareDocumentCommand\T{}{\mathcal{T}}
\DeclareDocumentCommand\rootGraph{m}{\rootGraphOperator(#1)}

\begin{definition}[SP-decomposition tree of an SP-digraph, see Chapter~11 in~\cite{brandstadt:graphClasses}]
   For every SP-digraph $D$ there is a rooted binary tree $\T$ (the SP-decomposition tree) whose leafs are associated with SP-digraphs consisting of two nodes that are connected by an arc, and whose non-leaf nodes are labeled either ``series'' or ``parallel''.
   They are associated with SP-digraphs that arise by the corresponding operation from the SP-digraphs associated with the children.
   The root node is associated with $D$ and denoted by $\rootGraph{\T}$.
\end{definition}

\begin{algorithm}[H]
\TitleOfAlgo{Feasible configuration detection for SP-digraphs}
\label{AlgoSPFeasibleConfigurationDetection}
\LinesNumbered
\setcounter{AlgoLine}{0}
\AlgoInput{SP-decomposition tree $\T$ for an SP-digraph}
\AlgoOutput{A map $\pi$ which maps the SP-digraphs $D$ of all nodes of $\T$ to their respective sets of feasible configurations.}
\BlankLine
\uIf{$\T$ has only one node}{
  Let $\pi(\rootGraph{\T}) := \setdef{ (1,1,1) }$.
}
\Else{
Let $L$ be the label of $\T$'s root. \;
Let $\T_1$ and $\T_2$ be the induced subtrees of the two root's children;
if $L = $ ``series'' then the sink of $\rootGraph{\T_1}$ shall be the source of $\rootGraph{\T_2}$. \;
\lFor{$i=1,2$}{Compute $\pi_i$ by calling Algorithm~\ref{AlgoSPFeasibleConfigurationDetection} for $\T_i$.} \label{AlgoSPFeasibleConfigurationDetectionRecurse}
Compute the set $Z$ of triples $(a,b,c)$ for which there are $(a_1,b_1,c_1) \in \pi(\rootGraph{\T_1})$ and $(a_2,b_2,c_2) \in \pi(\rootGraph{\T_2})$
that satisfy one of the properties in Lemma~\ref{TheoremFeasibleConfigurationsSeries} if $L = $ ``series'' or
satisfy both properties in Lemma~\ref{TheoremFeasibleConfigurationsParallel} if $L = $ ``parallel''. \label{AlgoSPFeasibleConfigurationDetectionEnumerate} \;
For all SP-digraphs $D$ associated with nodes in $\T$,
let $\pi(D) := \begin{cases}
                  \pi_1(D) & \text{ if $D$ is associated with some node of $\T_1$,} \\
                  \pi_2(D) & \text{ if $D$ is associated with some node of $\T_2$,} \\
                  Z        & \text{ otherwise.}
               \end{cases}$.\;
}
\Return{$\pi$}
\end{algorithm}

\begin{lemma}
   \label{TheoremAlgoSPFeasibleConfigurationDetection}
   Let $D = (V,A)$ be an SP-digraph and let $\T$ be its SP-decomposition tree.
   Then Algorithm~\ref{AlgoSPFeasibleConfigurationDetection} computes in $\orderO{|A| \cdot |V|^4}$ time, for every SP-digraph $D'$ associated with a node of $\T$, the set $\pi(D')$ of $D'$-feasible configurations.
\end{lemma}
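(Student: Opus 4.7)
The plan is to proceed by structural induction on the SP-decomposition tree $\T$, deriving correctness from Lemmas~\ref{TheoremFeasibleConfigurationsSeries} and~\ref{TheoremFeasibleConfigurationsParallel}, and then to bound the running time by a simple counting argument on the tree.

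For correctness, I would start with the base case in which $\T$ consists of a single leaf. Then $\rootGraph{\T}$ is a digraph on two vertices joined by a single arc $(s,t)$, and I would verify directly that $(1,1,1)$ is its unique feasible configuration. Indeed, since the arc $(s,t)$ cannot itself lie on any $2$-path, Property~\ref{FeasibleConfiguration1Path} of Definition~\ref{DefinitionFeasibleConfiguration} forces $c=1$ and $(s,t) \in S \cap T$, whence $a = b = 1$. For the inductive step, assume that the recursive calls in line~\ref{AlgoSPFeasibleConfigurationDetectionRecurse} return maps $\pi_1$ and $\pi_2$ that are correct on the subtrees $\T_1$ and $\T_2$. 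Then Lemma~\ref{TheoremFeasibleConfigurationsSeries} (if $L = $ ``series'') or Lemma~\ref{TheoremFeasibleConfigurationsParallel} (if $L = $ ``parallel'') gives an exact characterization of the feasible configurations of $\rootGraph{\T}$ in terms of pairs $((a_1,b_1,c_1),(a_2,b_2,c_2)) \in \pi(\rootGraph{\T_1}) \times \pi(\rootGraph{\T_2})$, which is precisely what the set $Z$ in line~\ref{AlgoSPFeasibleConfigurationDetectionEnumerate} collects. All remaining entries of $\pi$ are inherited from $\pi_1$ and $\pi_2$ and are correct by the inductive hypothesis.

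For the running time, I would first observe that for any SP-subdigraph $D'$ of $D$ appearing at a node of $\T$ (with source $s'$ and sink $t'$), the set of feasible configurations has cardinality at most $(|V|+1)^2 \cdot 3 = \orderO{|V|^2}$, since Definition~\ref{DefinitionFeasibleConfiguration} forces $a \leq |\outArcs[D']{s'}| \leq |V|$, $b \leq |\inArcs[D']{t'}| \leq |V|$, and $c \in \setdef{0,1,2}$. Since $\T$ is a binary tree whose leaves are in bijection with the arcs of $D$, it has at most $2|A|-1$ nodes in total, of which at most $|A|-1$ are internal. At each internal node, line~\ref{AlgoSPFeasibleConfigurationDetectionEnumerate} iterates over all pairs in $\pi(\rootGraph{\T_1}) \times \pi(\rootGraph{\T_2})$ and performs a constant-time check using the explicit case list of Lemma~\ref{TheoremFeasibleConfigurationsSeries} or the two-condition test of Lemma~\ref{TheoremFeasibleConfigurationsParallel}, taking $\orderO{|V|^2 \cdot |V|^2} = \orderO{|V|^4}$ time. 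Summing over all internal nodes yields the claimed total of $\orderO{|A| \cdot |V|^4}$.

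The main obstacle I anticipate is not conceptual but rather the careful accounting of the recursion: although the pseudocode passes complete maps $\pi_i$ up from each subtree, each node of $\T$ contributes its enumeration cost only once to the total, because $\T_1$ and $\T_2$ together partition the node set of $\T$ minus the root, so the costs telescope cleanly. A secondary point worth verifying explicitly is that the degree bounds $|\outArcs[D']{s'}|, |\inArcs[D']{t'}| \leq |V|$ hold uniformly for every subdigraph $D'$ associated with a node of $\T$, which is immediate since $D'$ is a subgraph of $D$ and hence has at most $|V|$ vertices incident to its source and sink.
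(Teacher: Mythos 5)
Your proof is correct and follows essentially the same route as the paper's: induction over the decomposition tree with correctness delegated to Lemmas~\ref{TheoremFeasibleConfigurationsSeries} and~\ref{TheoremFeasibleConfigurationsParallel}, and the running time bounded via the $\orderO{|V|^2}$ cap on the number of feasible configurations per node. Your accounting of the tree size ($2|A|-1$ nodes) and the justification of the base case are in fact slightly more careful than the paper's, but the argument is the same.
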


\begin{proof}
   We prove the correctness of the algorithm by induction on the number $k$ of $\T$'s nodes.
   The base case $k = 1$ is easy since here $D$ consists of a single arc, say $A = \setdef{(s,t)}$.
   It is easily verified that the only $D$-feasible configuration is $(1,1,1)$.
   If $k \geq 2$, $L$ is either ``series'' or ``parallel''.
   By induction hypothesis, the maps $\pi_1$ and $\pi_2$ that are computed in Line~\ref{AlgoSPFeasibleConfigurationDetectionRecurse} are correct.
   Moreover, Lemmas~\ref{TheoremFeasibleConfigurationsSeries} and~\ref{TheoremFeasibleConfigurationsParallel} ensure that $\pi(D)$ is computed correctly, completing the inductive step.

   The only time-critical step is in Line~\ref{AlgoSPFeasibleConfigurationDetectionEnumerate}, where the $D$-feasible configurations $(a,b,c)$ are computed from the $D_1$-feasible configurations $(a_1,b_1,c_1)$ and $D_2$-feasible configurations $(a_2,b_2,c_2)$ of the digraphs $D_1$ and $D_2$ associated with the root's children.
   Observe that the $a$- and $b$-values are non-negative integers and bounded from above by the degrees of the sources or sinks.
   Hence, for $i=1,2$, $D_i$ can have at most $\orderO{|V|^2}$ different feasible configurations.
   Thus, the computation can be carried out by first enumerating all $D_1$-feasible configurations and all $D_2$-feasible configurations and adding all combinations that can be produced according to Lemma~\ref{TheoremFeasibleConfigurationsSeries} or Lemma~\ref{TheoremFeasibleConfigurationsParallel}, respectively.
   Thus, the whole step can be carried out in $\orderO{|V|^4}$.

   Since $\T$ has at most $|A|$ nodes the whole algorithm runs in time $\orderO{|A| \cdot |V|^4}$.
\end{proof}

Note that for a series composition, the running time of Line~\ref{AlgoSPFeasibleConfigurationDetectionEnumerate} of
Algorithm~\ref{AlgoSPFeasibleConfigurationDetection} can be improved to $\orderO{|V|^3}$
since combinations in which $b_2$ and $a_1$ differ by more than $1$ can be skipped.
Unfortunately, we see no way of improving the step for the parallel composition.

\bigskip

\begin{algorithm}[H]
\TitleOfAlgo{Recursive \almostDisjointTwoPathDecomposition algorithm for SP-digraphs}
\label{AlgoSPRecursive}
\LinesNumbered
\setcounter{AlgoLine}{0}
\AlgoInput{SP-decomposition tree $\T$ for an SP-digraph $D$, a $D$-feasible configuration $(a,b,c)$ and the set $\pi(D')$ of $D'$-feasible configurations for all SP-digraphs $D'$ associated with $\T$'s nodes}
\AlgoOutput{A \almostDisjointTwoPathDecomposition $\X$ of a subgraph of $D$ corresponding to $(a,b,c)$ according to Definition~\ref{DefinitionFeasibleConfiguration}}
\BlankLine
\uIf{$\T$'s root is labeled ``series''}{
   Let $\T_1$ and $\T_2$ be the induced subtrees of the two root's children with associated SP-digraphs $D_1$ and $D_2$ such that the sink $t_1$ of $D_1$ is equal to the source $s_2$ of $D_2$. \;
   Let $s_1$ be $D_1$'s source and $t_2$ be $D_2$'s sink. \;
   Find $(a_1,b_1,c_1) \in \pi(D_1)$ and $(a_2,b_2,c_2) \in \pi(D_2)$ that satisfy one of the properties in Lemma~\ref{TheoremFeasibleConfigurationsSeries}
   with respect to $(a,b,c)$.
   \label{AlgoSPRecursiveSeriesChildConfigurations} \;
   \For{$i=1,2$}{
      Let $\pi_i$ be the restriction of $\pi$ to the nodes of subtree $\T_i$. \;
      Compute $\X_i$ by calling Algorithm~\ref{AlgoSPRecursive} for $(\T_i, (a_i,b_i,c_i), \pi_i)$. \label{AlgoSPRecursiveSeriesChildDecomposition} \;
      Let $S_i := \setdef{ a \in \outArcs[D_i]{s_i} }[ \text{ $a$ is in no arc of $\X_i$ }]$. \;
      Let $T_i := \setdef{ a \in \inArcs[D_i]{t_i} }[ \text{ $a$ is in no arc of $\X_i$ }]$.
   }
   Let $T^* := \begin{cases}
                  T_1 \setminus S_1 & \text{ if $a_1 = a$}, \\
                  T_1               & \text{ otherwise}
               \end{cases} \quad$
   and let $S^* := \begin{cases}
                  S_2 \setminus T_2 & \text{ if $b_2 = b$}, \\
                  S_2               & \text{ otherwise}
               \end{cases}$. \label{AlgoSPRecursiveSeriesMatchingSets} \;
   Compute a \almostDisjointTwoPathDecomposition $\Y$ of $T^* \cup S^*$. \label{AlgoSPRecursiveSeriesMatch} \;
   \Return{$\X_1 \cup \X_2 \cup \Y$.}
}
\uElseIf{$\T$'s root is labeled ``parallel''}{
   Let $\T_1$ and $\T_2$ be the induced subtrees of the two root's children. \;
   Find $(a_1,b_1,c_1) \in \pi(\rootGraph{\T_1})$ and $(a_2,b_2,c_2) \in \pi(\rootGraph{\T_2})$
   that satisfy both properties in Lemma~\ref{TheoremFeasibleConfigurationsParallel} 
   with respect to $(a,b,c)$. \label{AlgoSPRecursiveParallelChildConfigurations} \;
   \For{$i=1,2$}{
      Let $\pi_i$ be the restriction of $\pi$ to the nodes of subtree $\T_i$. \;
      Compute $\X_i$ by calling Algorithm~\ref{AlgoSPRecursive} for $(\T_i, (a_i,b_i,c_i), \pi_i)$ \label{AlgoSPRecursiveParallelChildDecomposition}.
   }
   \Return{$\X_1 \cup \X_2$.}
}
\lElse{\Return{$\emptyset$.}}
\end{algorithm}

\begin{lemma}
   \label{TheoremAlgoSPRecursive}
   Let $D = (V,A)$ be an SP-digraph, let $\T$ be its SP-decomposition tree, $(a,b,c)$ a $D$-feasible configuration and let $\pi(D')$ be the set of $D'$-feasible configurations for all SP-digraphs $D'$ associated with $\T$'s nodes.
   Then Algorithm~\ref{AlgoSPRecursive} works correctly and runs in $\orderO{|V|^3}$ time.
\end{lemma}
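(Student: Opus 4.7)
The plan is to prove correctness of Algorithm~\ref{AlgoSPRecursive} by structural induction on the SP-decomposition tree $\T$, and then to bound the running time by charging work to the nodes of $\T$.

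For the base case, when $\T$ is a single leaf, the associated digraph $D$ is a single arc $(s,t)$ and the only $D$-feasible configuration is $(1,1,1)$ with $S = T = \{(s,t)\}$ and $\X = \emptyset$. The algorithm falls through to the trailing ``else'' branch and returns $\emptyset$, which is correct since $\subgraph$ is then empty.

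For the inductive step I would treat the two compositions separately. In the series case, the child configurations $(a_i,b_i,c_i)$ selected in Line~\ref{AlgoSPRecursiveSeriesChildConfigurations} are guaranteed to exist by the necessity direction of Lemma~\ref{TheoremFeasibleConfigurationsSeries}, and by the induction hypothesis the recursive calls in Line~\ref{AlgoSPRecursiveSeriesChildDecomposition} return valid decompositions $\X_i$ of $\subgraph[i]$ together with implicit residual sets $S_i, T_i$. The crucial step is to verify that the definitions of $T^*$ and $S^*$ in Line~\ref{AlgoSPRecursiveSeriesMatchingSets} mirror the four cases (Cases~1--4) of the sufficiency proof of Lemma~\ref{TheoremFeasibleConfigurationsSeries}: the test $a_1 = a$ holds precisely when $U_1 := \{(s_1,t_1)\} \cap A_1$ is empty or free, so that $T_1 \cap S_1 = U_1$ and the algorithmic assignment $T^* := T_1 \setminus S_1$ coincides with the lemma's $T^* := T_1 \setminus U_1$; the complementary test $a_1 = a + 1$ captures the covered subcase, where the algorithm correctly keeps $T^* = T_1$. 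A symmetric analysis applies to $b_2$ and $S^*$. In every subcase this yields $|T^*| = |S^*|$, so the matching $\Y$ constructed in Line~\ref{AlgoSPRecursiveSeriesMatch} is well-defined, and the union $\X_1 \cup \X_2 \cup \Y$ is the required \almostDisjointTwoPathDecomposition by exactly the argument used in the sufficiency direction of that lemma. The parallel case is more direct: Lemma~\ref{TheoremFeasibleConfigurationsParallel} certifies that $\X_1 \cup \X_2$ is already a valid decomposition with $S = S_1 \cup S_2$, $T = T_1 \cup T_2$ and $c = \max\{c_1,c_2\}$, so no additional matching step is required.

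For the running time, the algorithm visits each node of $\T$ exactly once during the top-down recursion. At each node, finding a pair of child configurations reduces to enumerating one of $\pi(D_1)$ or $\pi(D_2)$ and looking up the uniquely determined partner triple in the other; with a hash-based representation of $\pi$ this costs $\orderO{|V|^2}$ per node, and the construction of $\Y$ in the series case contributes an additional $\orderO{|V|}$. The main obstacle in the running-time analysis is to argue that the total work telescopes to $\orderO{|V|^3}$ rather than the naive $\orderO{|A| \cdot |V|^2}$ bound; I would handle this by observing that $|\pi(D')|$ for any subgraph $D'$ appearing in $\T$ is bounded by $3 \cdot (\deg^+(s_{D'})+1)(\deg^-(t_{D'})+1)$ and amortizing these degree products along root-to-leaf paths of $\T$, using the fact that in a series composition the new internal vertex $t_1 = s_2$ no longer contributes to the source/sink degrees at the parent level.
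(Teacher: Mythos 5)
Your proof is correct and takes essentially the same route as the paper's: induction on the decomposition tree, with correctness of each composition step delegated to Lemmas~\ref{TheoremFeasibleConfigurationsSeries} and~\ref{TheoremFeasibleConfigurationsParallel}, an arbitrary pairing of $T^*$ with $S^*$, and an $\orderO{|V|^2}$ per-node cost (your verification that the algorithm's $T_1\setminus S_1$ equals the lemma's $T_1\setminus U_1$ via Property~\ref{FeasibleConfiguration1Path} is actually more detailed than the paper's one-line claim that $|T^*|=|S^*|$). The amortization you propose at the end is unnecessary: a simple SP-digraph satisfies $|A|=\orderO{|V|}$, so $\T$ has $\orderO{|V|}$ nodes and the naive per-node bound already yields $\orderO{|V|^3}$, which is precisely the paper's argument.
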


\begin{proof}
   We prove the correctness of the algorithm by induction on the number $k$ of $\T$'s nodes.
   If $k = 1$, then $D$ consists of a single arc, say $A = \setdef{(s,t)}$.
   Hence, $(1,1,1)$ is the only $D$-feasible configuration whose corresponding \almostDisjointTwoPathDecomposition in Definition~\ref{DefinitionFeasibleConfiguration} is $\X = \emptyset$.

   If $k \geq 2$, we distinguish two cases.
   \textbf{Case 1:} $\T$'s root is labeled ``series''. \\
   Since $(a,b,c)$ is $D$-feasible, the $D_i$-feasible configurations $(a_i,b_i,c_i)$ in Step~\ref{AlgoSPRecursiveSeriesChildConfigurations} exist by Lemma~\ref{TheoremFeasibleConfigurationsSeries} for $i=1,2$.
   By induction, for $i=1,2$, the \almostDisjointTwoPathDecomposition $\X_i$ computed in Step~\ref{AlgoSPRecursiveSeriesChildDecomposition} corresponds to $(a_i,b_i,c_i)$.
   The sets $S^*$ and $T^*$ constructed in Step~\ref{AlgoSPRecursiveSeriesMatchingSets} have the same cardinality.
   Moreover, a \almostDisjointTwoPathDecomposition $\Y$ of $S^* \cup T^*$ can be obtained by arbitrarily matching the arcs in $T^*$ to arcs in $S^*$ and considering the paths consisting of the matched pairs.
   Finally, Lemma~\ref{TheoremFeasibleConfigurationsSeries} ensures that $\X_1 \cup \X_2 \cup \Y$ is a \almostDisjointTwoPathDecomposition that corresponds to $(a,b,c)$.

\noindent
   \textbf{Case 2:} $\T$'s root is labeled ``parallel''. \\
   Since $(a,b,c)$ is $D$-feasible, the $D_i$-feasible configuration $(a_i,b_i,c_i)$  Step~\ref{AlgoSPRecursiveParallelChildConfigurations} exist by Lemma~\ref{TheoremFeasibleConfigurationsParallel} for $i=1,2$.
   By induction, for $i=1,2$, the \almostDisjointTwoPathDecomposition $\X_i$ computed in Step~\ref{AlgoSPRecursiveParallelChildDecomposition} corresponds to $(a_i,b_i,c_i)$.
   Finally, Lemma~\ref{TheoremFeasibleConfigurationsParallel} ensures that $\X_1 \cup \X_2$ is a \almostDisjointTwoPathDecomposition that corresponds to $(a,b,c)$.

   It remains to prove the statement on the running time
   Steps~\ref{AlgoSPRecursiveSeriesChildConfigurations} and~\ref{AlgoSPRecursiveParallelChildConfigurations} can be carried out in $\orderO{|V|^2}$ since for fixed $a_1$ and $b_2$ (and given $a$, $b$ and $c$), only a constant number of feasible configurations $(a_1,b_1,c_1)$ and $(a_2,b_2,c_2)$ remain.
   As argued above, the computation of $\Y$ is easy and can be done $\orderO{|V|}$.

   Thus, apart from the recursive calls the algorithm runs in $\orderO{|V|^2}$ time.
   Since $\T$ has at most $|V|$ nodes, the whole algorithm runs in time $\orderO{|V|^3}$.
\end{proof}

\bigskip

\begin{algorithm}[H]
\TitleOfAlgo{\almostDisjointTwoPathDecomposition algorithm for SP-digraphs}
\label{AlgoSPDecomposition}
\LinesNumbered
\setcounter{AlgoLine}{0}
\AlgoInput{An SP-digraph $D = (V,A)$}
\AlgoOutput{%
  A \almostDisjointTwoPathDecomposition $\X$ of $D$ or $\emptyset$ if no such decomposition exists
}
\BlankLine
Compute an SP-decomposition tree $\T$ for $D$. \label{AlgoSPDecompositionTree} \;
Compute $\pi$ by calling Algorithm~\ref{AlgoSPFeasibleConfigurationDetection} for $\T$. \label{AlgoSPDecompositionMap} \;
\uIf{$\pi(D)$ contains is a triple $(0,0,c)$ for some $c$ \label{AlgoSPDecompositionCheck}}{
   Compute $\X$ by calling Algorithm~\ref{AlgoSPRecursive} for $(\T, (0,0,c), \pi)$. \;
   \Return{$\X$}
}
\lElse{
   \Return{$\emptyset$.}
}
\end{algorithm}

\begin{theorem}
   \label{TheoremAlgoSPDecomposition}
   Given an SP-digraph $D = (V,A)$, Algorithm~\ref{AlgoSPDecomposition} finds, in $\orderO{|A| \cdot |V|^4}$ time, a \almostDisjointTwoPathDecomposition of $D$ or correctly states that no such decomposition exists.
\end{theorem}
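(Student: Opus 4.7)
The plan is to bolt together the three main components that have already been established: the construction of the SP-decomposition tree, the feasibility-computation routine (Algorithm~\ref{AlgoSPFeasibleConfigurationDetection}), and the recursive reconstruction routine (Algorithm~\ref{AlgoSPRecursive}). The essential observation is that, by the remark immediately following Definition~\ref{DefinitionFeasibleConfiguration}, $D$ admits a \almostDisjointTwoPathDecomposition if and only if $(0,0,c)$ is $D$-feasible for some $c \in \setdef{0,2}$. Hence Algorithm~\ref{AlgoSPDecomposition}'s decision in Line~\ref{AlgoSPDecompositionCheck} is exactly the right one, provided that $\pi(D)$ has been computed correctly and $D$ itself appears as the root of $\T$.

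For correctness, I would argue as follows. In Line~\ref{AlgoSPDecompositionTree}, an SP-decomposition tree $\T$ whose root is associated with $D$ exists and can be constructed in linear time by a standard algorithm (for instance, the one referenced in~\cite{brandstadt:graphClasses}). By Lemma~\ref{TheoremAlgoSPFeasibleConfigurationDetection}, the map $\pi$ computed in Line~\ref{AlgoSPDecompositionMap} assigns to every node of $\T$ the correct set of feasible configurations for the associated SP-digraph; in particular $\pi(D)$ is correct. If $\pi(D)$ contains no triple of the form $(0,0,c)$, then by Definition~\ref{DefinitionFeasibleConfiguration} no \almostDisjointTwoPathDecomposition of $D$ exists, and returning $\emptyset$ is correct. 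Otherwise, Lemma~\ref{TheoremAlgoSPRecursive} guarantees that the call to Algorithm~\ref{AlgoSPRecursive} with input $(\T, (0,0,c), \pi)$ produces a \almostDisjointTwoPathDecomposition of $\subgraph$ for $S = T = \emptyset$, that is, of $D$ itself.

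For the running time, I would sum up the three contributions. Computing $\T$ in Line~\ref{AlgoSPDecompositionTree} takes linear time. The call to Algorithm~\ref{AlgoSPFeasibleConfigurationDetection} in Line~\ref{AlgoSPDecompositionMap} dominates and runs in $\orderO{|A| \cdot |V|^4}$ by Lemma~\ref{TheoremAlgoSPFeasibleConfigurationDetection}. The final reconstruction call runs in $\orderO{|V|^3}$ by Lemma~\ref{TheoremAlgoSPRecursive}. Adding these bounds yields the claimed overall complexity of $\orderO{|A| \cdot |V|^4}$.

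The only mildly delicate point, which is more of a bookkeeping issue than a genuine obstacle, is to make sure that ``$D$-feasibility of $(0,0,c)$'' is really the existence criterion: one has to check that $S = T = \emptyset$ forces $\subgraph = D$ and that Property~\ref{FeasibleConfigurationTriangles} of Definition~\ref{DefinitionFeasibleConfiguration} becomes vacuous in this case, so that a full decomposition of $D$ is obtained. Beyond this, everything reduces to invoking the lemmas already in place, so no new combinatorial work is required.
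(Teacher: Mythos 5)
Your proposal is correct and follows essentially the same route as the paper's own proof: build the SP-decomposition tree, invoke Lemma~\ref{TheoremAlgoSPFeasibleConfigurationDetection} for the correctness of $\pi$, use the observation after Definition~\ref{DefinitionFeasibleConfiguration} that a decomposition exists iff $(0,0,c)$ is $D$-feasible for $c\in\{0,2\}$, and invoke Lemma~\ref{TheoremAlgoSPRecursive} for the reconstruction, with the running time dominated by the feasibility computation. Your closing remark that $S=T=\emptyset$ yields a decomposition of all of $D$ is a reasonable extra check that the paper leaves implicit.
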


\begin{proof}
   Let $D = (V,A)$ be an SP-digraph.
   Then the decomposition tree $\T$ for $D$ is computed in Step~\ref{AlgoSPDecompositionTree}, which can be done in $\orderO{|A|}$ time with the recognition algorithm from~\cite{ValdesTL82}.
   In Step~\ref{AlgoSPDecompositionMap}, map $\pi$ is computed in $\orderO{|V|^5}$ time (see Lemma~\ref{TheoremAlgoSPFeasibleConfigurationDetection}).
   From Definition~\ref{DefinitionFeasibleConfiguration} it is easy to see that $D$ has a \almostDisjointTwoPathDecomposition if and only if $(0,0,0)$ or $(0,0,2)$ is a $D$-feasible configuration (note that $(0,0,1)$ is never a $D$-feasible configuration).
   This condition is checked in Step~\ref{AlgoSPDecompositionCheck}.
   In the case a decomposition exists, a decomposition corresponding to such a configuration is computed by calling Algorithm~\ref{AlgoSPRecursive} in time $\orderO{|V|^3}$ (see Lemma~\ref{TheoremAlgoSPRecursive}).
   The running time of the algorithm is inherited from that of Algorithm~\ref{AlgoSPFeasibleConfigurationDetection}.
\end{proof}

\paragraph{Extension to undirected graphs.}
One may be tempted to apply similar dynamic programming techniques for undirected graphs as well.
We failed to obtain a dynamic programming algorithm and would like to point out the problems we encountered.

The high-level reason why the dynamic program for digraphs works is that, when combining two partial decompositions of the two subdigraphs (in a series or parallel composition step), it does not matter which arcs were used so far, but only how many arcs were used.
When combining two undirected graphs having source $s$ and sink $t$ using a parallel composition, one may need to pair two edges in order to produce all potential partial \almostDisjointTwoPathDecompositions of the combined graph.
Unfortunately, these pairings are not independent of each other (as it was the case for the series composition for digraphs):
it may happen that one of the edges is an edge $\setdef{s,t}$, and if we want to pair it with an edge $\setdef{v,t}$ we have to ensure that there is no $s$-$v$-path in the partial \almostDisjointTwoPathDecomposition of the graph containing vertex $v$.
Moreover, if the edges $\setdef{v,t}$ and $\setdef{w,t}$ are paired we have to ensure that we do not pair $\setdef{s,v}$ and $\setdef{s,w}$ at the same time.
In order to take care of these problematic cases, additional data has to be stored in the configurations.
Finally, this additional data has to be computed in a series composition, which creates further complications we were unable to handle.

\pagebreak
\bibliographystyle{plain}
\bibliography{path-decomposition}

\end{document}